\providecommand{\U}[1]{\protect\rule{.1in}{.1in}}
\newtheorem{theorem}{Theorem}
\newtheorem{proposition}[theorem]{Proposition}
\newenvironment{proof}[1][Proof]{\noindent\textbf{#1.} }{\ \rule{0.5em}{0.5em}}
\begin{document}
\title{Upper bounds on secret key agreement over lossy thermal bosonic channels}
\author{Eneet Kaur}
\affiliation{Hearne Institute for Theoretical Physics, Department of Physics and Astronomy,
Baton Rouge, Louisiana 70803, USA}
\author{Mark M. Wilde}
\affiliation{Hearne Institute for Theoretical Physics, Department of Physics and Astronomy,
Baton Rouge, Louisiana 70803, USA}
\affiliation{Center for Computation and Technology, Louisiana State University, Baton
Rouge, Louisiana 70803, USA}

\begin{abstract}
Upper bounds on the secret-key-agreement capacity of a quantum channel serve
as a way to assess the performance of practical quantum-key-distribution
protocols conducted over that channel. In particular, if a protocol employs a
quantum repeater, achieving secret-key rates exceeding these upper bounds is a
witness to having a working quantum repeater. In this paper, we extend a
recent advance [Liuzzo-Scorpo \textit{et al}., Phys.~Rev.~Lett.~119, 120503 (2017)] in the theory
of the teleportation simulation of single-mode phase-insensitive Gaussian
channels such that it now applies to the relative entropy of entanglement
measure. As a consequence of this extension, we find tighter upper bounds on
the non-asymptotic secret-key-agreement capacity of the lossy thermal bosonic
channel than were previously known. The lossy thermal bosonic channel serves
as a more realistic model of communication than the pure-loss bosonic channel,
because it can model the effects of eavesdropper tampering and imperfect
detectors. An implication of our result is that the previously known upper
bounds on the secret-key-agreement capacity of the thermal channel are too
pessimistic for the practical finite-size regime in which the channel is used
a finite number of times, and so it should now be somewhat easier to witness a
working quantum repeater when using secret-key-agreement capacity upper bounds
as a benchmark.

\end{abstract}
\date{\today}
\maketitle
\preprint{ }
\volumeyear{year}
\volumenumber{number}
\issuenumber{number}
\eid{identifier}
\startpage{1}
\endpage{102}

\section{Introduction}

One of the main goals of quantum information theory \cite{H13book,H06,W15book}%
\ is to establish bounds on communication rates for various
information-processing tasks. An important application lies in the domain of
secret communication, following the development of quantum key distribution
\cite{bb84,E91}. In recent years, there has been a growing interest in
establishing bounds on the secret-key-agreement capacity of a quantum channel,
which is the highest rate at which communicating parties can use the channel
and public classical communication to distill a secret key
\cite{TGW14IEEE,TGW14Nat,STW16,PLOB15,Goodenough2015,TSW16,AML16,WTB16,Christandl2017,Wilde2016,BA17,RGRKVRHWE17,TSW17}%
. Such bounds have been proven by exploiting the methods of quantum
information theory and can be interpreted as setting the fundamental
limitations of quantum key distribution whenever a quantum repeater is not
available \cite{L15}.

An important development occurred in \cite{TGW14Nat}, in which it was
established that there is a fundamental rate-loss trade-off that any
repeaterless quantum key distribution protocol cannot overcome. That is,
without a quantum repeater, the rate of secret key that can be distilled from
a pure-loss bosonic channel (lossy optical fiber or a free-space channel)
decreases exponentially with increasing distance \cite{TGW14Nat}.

Later, this bound was improved to establish that the secret-key-agreement
capacity of a pure-loss bosonic channel of transmissivity $\eta\in(0,1)$ is
equal to $-\log_{2}(1-\eta)$. This bound was claimed in \cite{PLOB15} and
rigorously proven in \cite{WTB16}. In particular, let $P_{\mathcal{L}_{\eta}%
}^{\leftrightarrow}(n,\varepsilon)$ denote the highest rate at which
$\varepsilon$-close-to-ideal secret key can be distilled by making $n$
invocations of a pure-loss channel $\mathcal{L}_{\eta}$ of transmissivity
$\eta$, along with the assistance of public classical communication
\cite{WTB16}. In \cite{WTB16}, $P_{\mathcal{L}_{\eta}}^{\leftrightarrow
}(n,\varepsilon)$ is called the non-asymptotic secret-key-agreement capacity
of the channel $\mathcal{L}_{\eta}$. One of the results of \cite{WTB16} is the
following fundamental upper bound:%
\begin{equation}
P_{\mathcal{L}_{\eta}}^{\leftrightarrow}(n,\varepsilon)\leq-\log_{2}%
(1-\eta)+\frac{C(\varepsilon)}{n}, \label{eq:SC-bound}%
\end{equation}
where $C(\varepsilon)=\log_{2}6+2\log_{2}(\left[  1+\varepsilon\right]
/\left[  1-\varepsilon\right]  )$. The bound in \eqref{eq:SC-bound} is known
as a strong converse bound because it converges to the secret-key-agreement
capacity $-\log_{2}(1-\eta)$ in the limit as $n\rightarrow\infty$. We suspect
that there is little room for improvement of the bound in \eqref{eq:SC-bound}
and discuss this point further in Appendix~\ref{sec:disp-rev-coh-info}. The
bound in \eqref{eq:SC-bound} is to be contrasted with the following
weak-converse bound:%
\begin{equation}
P_{\mathcal{L}_{\eta}}^{\leftrightarrow}(n,\varepsilon)\leq\frac
{1}{1-\varepsilon}\left[  -\log_{2}(1-\eta)+\frac{h_{2}(\varepsilon)}%
{n}\right]  , \label{eq:WC-bound}%
\end{equation}
which follows as a direct consequence of \cite[Section~8]{WTB16} and
\cite[Eq.~(2)]{WR12} (see also \cite[Eq.~(134)]{MW12}). For the benefit of the
reader, we explain how to arrive at this weak-converse bound in more detail in
Appendix~\ref{sec:weak-converse-bnds}. In the above,%
\begin{equation}
h_{2}(\varepsilon)=-\varepsilon\log_{2}\varepsilon-(1-\varepsilon)\log
_{2}(1-\varepsilon)
\end{equation}
denotes the binary entropy. The bound in \eqref{eq:WC-bound} is a
weak-converse bound because it requires the extra limit as $\varepsilon
\rightarrow0$ after taking the limit as $n\rightarrow\infty$, in order to
arrive at the capacity upper bound of $-\log_{2}(1-\eta)$. The significance of
the bounds in \eqref{eq:SC-bound} and \eqref{eq:WC-bound} is that they apply
for any finite number $n$ of channel uses and key-quality
parameter$~\varepsilon$. As such, these bounds can be used to assess the
performance of any practical secret-key-agreement protocol conducted over a
pure-loss channel$~\mathcal{L}_{\eta}$.

The pure-loss channel is somewhat of an ideal model for a communication
channel, even if it does have a strong physical underpinning in the context of
free-space communication \cite{YS78,S09}. In particular, a working assumption
of the model is that the channel input interacts with an environment prepared
in the vacuum state. However, in practical setups, we might expect the
environment to be modeled as a thermal state of a fixed mean photon number
$N_{B}>0$ \cite{S09}, and in such a case, the channel is called a thermal
channel and denoted by $\mathcal{L}_{\eta,N_{B}}$\ (also called thermal-lossy
channel, as in \cite{RGRKVRHWE17}). This added thermal noise is often called
excess noise \cite{NH04,LDTG05}, which can serve as a simple model of
tampering by an eavesdropper. Additionally, there are realistic effects in
communication schemes, such as dark counts of photon detectors that can be
modeled as arising from thermal photons in the environment
\cite{S09,RGRKVRHWE17}. As such, it is an important goal to establish upper
bounds on the secret-key-agreement capacity of the thermal channel in order to
assess the performance of practical secret-key-agreement protocols, and the
main contribution of the present paper is to establish upper bounds on the
non-asymptotic secret-key-agreement capacity $P_{\mathcal{L}_{\eta,N_{B}}%
}^{\leftrightarrow}(n,\varepsilon)$ of the thermal channel $\mathcal{L}%
_{\eta,N_{B}}$, which improve upon the prior known bounds from
\cite{PLOB15,WTB16}\ in certain regimes.

Prior works established that%
\begin{equation}
-\log_{2}(\left[  1-\eta\right]  \eta^{N_{B}})-g(N_{B})
\label{eq:rel-ent-thermal}%
\end{equation}
is an upper bound on the secret-key-agreement capacity of a thermal channel
$\mathcal{L}_{\eta,N_{B}}$ with transmissivity $\eta\in(0,1)$ and thermal mean
photon number $N_{B}>0$. This bound was claimed in \cite{PLOB15} and
rigorously proven in \cite{WTB16}. In this expression,%
\begin{equation}
g(N_{B})=(N_{B}+1)\log_{2}(N_{B}+1)-N_{B}\log_{2}N_{B}%
\end{equation}
is the entropy of a thermal state of mean photon number~$N_{B}$. In
particular, the following bound was given in \cite[Section~8]{WTB16}%
\begin{multline}
P_{\mathcal{L}_{\eta,N_{B}}}^{\leftrightarrow}(n,\varepsilon)\leq-\log
_{2}(\left[  1-\eta\right]  \eta^{N_{B}})-g(N_{B})\label{eq:SC-thermal}\\
+\sqrt{\frac{2V_{\eta,N_{B}}}{n\left(  1-\varepsilon\right)  }}+\frac
{C(\varepsilon)}{n},
\end{multline}
where%
\begin{equation}
V_{\eta,N_{B}}=N_{B}(N_{B}+1)\log_{2}^{2}(\eta\left[  N_{B}+1\right]  /N_{B}),
\end{equation}
and the following weak-converse bound is a direct consequence of
\cite[Section~8]{WTB16} and \cite{WR12,MW12} (explained also in
Appendix~\ref{sec:weak-converse-bnds}):%
\begin{multline}
P_{\mathcal{L}_{\eta,N_{B}}}^{\leftrightarrow}(n,\varepsilon)\leq
\label{eq:WC-thermal}\\
\frac{1}{1-\varepsilon}\left[  -\log_{2}(\left[  1-\eta\right]  \eta^{N_{B}%
})-g(N_{B})+\frac{h_{2}(\varepsilon)}{n}\right]  .
\end{multline}
Again, the value of these bounds is that they apply for any finite number $n$
of channel uses and key-quality parameter $\varepsilon$. However, by
inspecting \eqref{eq:SC-thermal}, we see that the order $1/\sqrt{n}$ and lower
terms are strictly positive.

The main contribution of the present paper is to improve the bound in
\eqref{eq:SC-thermal} in such a way that the order $1/\sqrt{n}$ term is
negative whenever $\varepsilon<1/2$, representing the backoff from capacity
incurred by using the channel a finite number of times while allowing for
non-zero error. In fact, we find the following improved bound for several
realistic values of $\eta$ and $N_{B}$:%
\begin{multline}
P_{\mathcal{L}_{\eta,N_{B}}}^{\leftrightarrow}(n,\varepsilon)\leq-\log
_{2}(\left[  1-\eta\right]  \eta^{N_{B}})-g(N_{B})\label{eq:new-bound}\\
+\sqrt{\frac{V_{\eta,N_{B}}^{\prime}}{n}}\Phi^{-1}(\varepsilon)+\frac{O(\log
n)}{n},
\end{multline}
where $V_{\eta,N_{B}}^{\prime}$ is a channel-dependent parameter that we
discuss later and $\Phi^{-1}$ denotes the inverse of the cumulative normal
distribution function (see \eqref{eq:Phi-inv}), for which we have that
$\Phi^{-1}(\varepsilon)<0$ whenever $\varepsilon<1/2$. We should note that the
bound in \eqref{eq:new-bound}\ applies only for $n$ sufficiently large (such
that $n$ is proportional to $1/\varepsilon^{2}$), as it relies on the
Berry--Esseen theorem \cite{KS10,S11}, but many prior works have shown that
first- and second-order terms like the above one serve as an excellent
approximation for non-asymptotic capacities even for small $n$
\cite{polyanskiy10,P10,MW12,P13,TBR15}. The main new tool that we use to
establish this result, beyond those used and introduced in \cite{WTB16}, is a
recent development in \cite{LMGA17} regarding teleportation simulation of
single-mode phase-insensitive bosonic channels using finite-energy resource
states. Figure~\ref{fig:results} plots this bound for several realistic values
of the distance $L$ (related to transmissivity $\eta$) and thermal mean photon
number $N_{B}$, and we point to Section~\ref{sec:results}\ for a more
detailed discussion of these figures.

In the remainder of the paper, we argue how to arrive at the bound in
\eqref{eq:new-bound}. In what follows, we review the formalism of quantum
Gaussian states and channels \cite{adesso14,S17}, and we also review
information quantities needed, such as quantum relative entropy and relative
entropy variance. We then review the critical tool of teleportation simulation
of a quantum channel \cite{BDSW96,WPG07,NFC09,Mul12} and how it can be used
with \cite[Eq.~(4.34)]{WTB16} and ideas from \cite{LMGA17} in order to arrive
at \eqref{eq:new-bound}. We finally close with a summary and some open questions.

\section{Preliminaries}

\subsection{Quantum Gaussian states and channels}

The main class of quantum states in which we are interested in this paper are
quantum Gaussian states \cite{adesso14,S17}. In our brief review, we consider
$m$-mode Gaussian states, where $m$ is some fixed positive integer. Let
$\hat{x}_{j}$ denote each quadrature operator ($2m$ of them for an $m$-mode
state), and let $\hat{x}\equiv\left[  \hat{q}_{1},\ldots,\hat{q}_{m},\hat
{p}_{1},\ldots,\hat{p}_{m}\right]  \equiv\left[  \hat{x}_{1},\ldots,\hat
{x}_{2m}\right]  $ denote the vector of quadrature operators, so that the
first~$m$ entries correspond to position-quadrature operators and the last~$m$
to momentum-quadrature operators. The quadrature operators satisfy the
following commutation relations:%
\begin{equation}
\left[  \hat{x}_{j},\hat{x}_{k}\right]  =i\Omega_{j,k},\quad\mathrm{where}%
\quad\Omega=%
\begin{bmatrix}
0 & 1\\
-1 & 0
\end{bmatrix}
\otimes I_{m}\text{,} \label{eq:symplectic-form}%
\end{equation}
and $I_{m}$ is the $m\times m$ identity matrix. We also take the annihilation
operator $\hat{a}=\left(  \hat{q}+i\hat{p}\right)  /\sqrt{2}$. Let $\rho$ be a
Gaussian state, with the mean-vector entries $\left\langle \hat{x}%
_{j}\right\rangle ^{\rho}=\mu_{j}^{\rho}$, and let $\mu^{\rho}$ denote the
mean vector. The entries of the covariance matrix $V^{\rho}$\ of $\rho$ are
given by
\begin{equation}
V_{j,k}^{\rho}\equiv\left\langle \left\{  \hat{x}_{j}-\mu_{j}^{\rho},\hat
{x}_{k}-\mu_{k}^{\rho}\right\}  \right\rangle ^{\rho}.
\label{eq:covariance-matrices}%
\end{equation}
A $2m\times2m$ matrix $S$ is symplectic if it preserves the symplectic form:
$S\Omega S^{T}=\Omega$. According to Williamson's theorem \cite{W36}, there is
a diagonalization of the covariance matrix $V^{\rho}$ of the form,
\begin{equation}
V^{\rho}=S^{\rho}\left(  D^{\rho}\oplus D^{\rho}\right)  \left(  S^{\rho
}\right)  ^{T},
\end{equation}
where $S^{\rho}$ is a symplectic matrix and $D^{\rho}\equiv\operatorname{diag}%
(\nu_{1},\ldots,\nu_{m})$ is a diagonal matrix of symplectic eigenvalues such
that $\nu_{i}\geq1$ for all $i\in\left\{  1,\ldots,m\right\}  $. We say that a
quantum Gaussian state is faithful if all of its symplectic eigenvalues are
strictly greater than one (this also means that the state is positive
definite). Faithfulness of Gaussian states is required to ensure that
$G^{\rho}$ is non-singular. We can write the density operator $\rho$ of a
faithful state in the following exponential form \cite{PhysRevA.71.062320,K06,H10} (see also \cite{H13book,S17})%
:
\begin{align}
&  \rho=(Z^{\rho})^{-1/2}\exp\left[  -\frac{1}{2}(\hat{x}-\mu^{\rho}%
)^{T}G^{\rho}(\hat{x}-\mu^{\rho})\right]  ,\label{eq:exp-form}\\
&  \mathrm{with}\quad Z^{\rho}\equiv\det(\left[  V^{\rho}+i\Omega\right]
/2)\\
&  \mathrm{and}\quad G^{\rho}\equiv-2\Omega S^{\rho}\left[
\operatorname{arcoth}(D^{\rho})\right]  ^{\oplus2}\left(  S^{\rho}\right)
^{T}\Omega, \label{eq:G_rho}%
\end{align}
where $\operatorname{arcoth}(x)\equiv\frac{1}{2}\ln\!\left(  \frac{x+1}%
{x-1}\right)  $ with domain $\left(  -\infty,-1\right)  \cup\left(
1,+\infty\right)  $.
Note that we can also write%
\begin{equation}
G^{\rho}=2i\Omega\operatorname{arcoth}(iV^{\rho}\Omega),
\label{eq:more-compact-G-rho}%
\end{equation}
so that $G^{\rho}$ is represented directly in terms of the covariance matrix
$V^{\rho}$. By inspection, the $G$ and $V$ matrices are symmetric. In what
follows, we adopt the same notation for quantities associated with a density
operator $\sigma$, such as $\mu^{\sigma}$, $V^{\sigma}$, $S^{\sigma}$,
$D^{\sigma}$, $Z^{\sigma}$, and $G^{\sigma}$.

A two-mode Gaussian state $\rho$ with covariance matrix in \textquotedblleft
standard form\textquotedblright\ has a covariance matrix as follows
\cite{DGCZ00,S00}:%
\begin{equation}
V^{\rho}=%
\begin{bmatrix}
a & c\\
c & b
\end{bmatrix}
\oplus%
\begin{bmatrix}
a & -c\\
-c & b
\end{bmatrix}
. \label{eq:CM-standard-form}%
\end{equation}
The symplectic diagonalization of the covariance matrix~$V$ simplifies as well
\cite{SIS04}:%
\begin{equation}
V=S\left(  D\oplus D\right)  S^{T},
\end{equation}
where%
\begin{align}
S  &  =\left(  I_{2}\oplus\sigma_{Z}\right)  S_{0}^{\oplus2}\left(
I_{2}\oplus\sigma_{Z}\right)  ,\\
S_{0}  &  =%
\begin{bmatrix}
\omega_{+} & \omega_{-}\\
\omega_{-} & \omega_{+}%
\end{bmatrix}
,\qquad\omega_{\pm} =\sqrt{\frac{a+b\pm\sqrt{y}}{2\sqrt{y}}},\\
D  &  =%
\begin{bmatrix}
\nu_{-} & 0\\
0 & \nu_{+}%
\end{bmatrix}
,\qquad\nu_{\pm} =\left[  \sqrt{y}\pm\left(  b-a\right)  \right]  /2,\\
y  &  =\left(  a+b\right)  ^{2}-4c^{2},
\end{align}
and $\sigma_{Z}$ denotes the standard Pauli $Z$ matrix. Given a two-mode state
with covariance matrix in standard form as in \eqref{eq:CM-standard-form}, it
is a separable state if%
\begin{equation}
c\leq c_{\text{sep}}\equiv\sqrt{\left(  a-1\right)  \left(  b-1\right)  },
\label{eq:c_sep}%
\end{equation}
which can be determined from the condition given in \cite[Eq.~(14)]{AI05}. We
return to this condition when we discuss the relative entropy of entanglement
for quantum Gaussian states.

A quantum Gaussian channel is one that preserves Gaussian states
\cite{CEGH08,adesso14,S17}. The action of a quantum Gaussian channel on an
input state $\rho$ is characterized by two matrices $X$ and $Y$, which
transform the covariance matrix $V^{\rho}$ of $\rho$ as follows:%
\begin{equation}
V^{\rho}\rightarrow XV^{\rho}X^{T}+Y, \label{eq:Gaussian-channel}%
\end{equation}
where $X^{T}$ is the transpose of the matrix $X$. In this formalism, the
thermal channel $\mathcal{L}_{\eta,N_{B}}$ with transmissivity $\eta\in(0,1)$
and thermal mean photon number $N_{B}>0$ is given by%
\begin{equation}
X=\sqrt{\eta}I_{2},\qquad Y=(1-\eta)(2N_{B}+1)I_{2},
\label{eq:thermal-ch-Gaussian}%
\end{equation}
where $I_{2}$ is the $2\times2$ identity matrix. Our principal focus in this
paper is on the thermal channel.

\subsection{Teleportation simulation and reduction by teleportation}

\label{sec:tp-simulation}Teleportation simulation of a channel
\cite{BDSW96,WPG07,NFC09,Mul12}\ is a key tool used to establish the upper
bounds in \eqref{eq:SC-bound}, \eqref{eq:WC-bound}, \eqref{eq:SC-thermal}, and
\eqref{eq:WC-thermal}. The basic idea behind this tool is that channels with
sufficient symmetry can be simulated by the action of a teleportation protocol
\cite{PhysRevLett.70.1895,prl1998braunstein,Werner01}\ on a resource state
$\omega_{AB}$\ shared between the sender $A$\ and receiver $B$. More
generally, a channel $\mathcal{N}_{A^{\prime}\rightarrow B}$ with input system
$A^{\prime}$ and output system $B$ is defined to be teleportation simulable
with associated resource state $\omega_{AB}$ if the following equality holds
for all input states $\rho_{A^{\prime}}$:%
\begin{equation}
\mathcal{N}_{A^{\prime}\rightarrow B}(\rho_{A^{\prime}})=\mathcal{T}%
_{A^{\prime}AB}(\rho_{A^{\prime}}\otimes\omega_{AB}%
),\label{eq:TP-simulable-channel}%
\end{equation}
where $\mathcal{T}_{A^{\prime}AB}$ is a quantum channel consisting of local
operations and classical communication between the sender, who has systems
$A^{\prime}$ and $A$, and the receiver, who has system $B$ ($\mathcal{T}%
_{A^{\prime}AB}$ can also be considered a generalized teleportation protocol,
as in \cite{Werner01}). The definition in \eqref{eq:TP-simulable-channel} was
first given in \cite{WTB16arxiv}, based on many earlier developments
\cite{BDSW96,Werner01,WPG07,NFC09,Mul12}. The implication of channel
simulation via teleportation is that the performance of a general protocol
that uses the channel $n$ times, with each use interleaved by local operations
and classical communication (LOCC), can be bounded from above by the
performance of a protocol with a much simpler form:\ the simplified protocol
consists of a single round of LOCC acting on $n$ copies of $\omega_{AB}$
\cite{BDSW96,NFC09,Mul12}. This is called reduction by teleportation. Of
course, a secret-key-agreement protocol is one particular kind of protocol of
the above form, as considered in \cite{PLOB15,WTB16}, and so the general
reduction method of \cite{BDSW96,NFC09,Mul12}\ applies.

For continuous-variable bosonic systems, the teleportation simulation of a
single-mode bosonic Gaussian channels was considered in \cite{NFC09}, and the
simulation therein only simulates the channel exactly in the limit in which
the resource state is the result of transmitting one share of an
infinitely-squeezed, two-mode squeezed vacuum state \cite{S17} through the
channel (this resource state is sometimes called the Choi state of the channel
\cite{S17}, and we use this terminology in what follows). Thus, when applying
this argument to bound the rates of secret-key-agreement protocols as
discussed above, one must take care with an appropriate limiting argument, as
pointed out in \cite{N16}\ and handled already in \cite{WTB16}. This
teleportation simulation argument with an infinitely-squeezed resource state
is one of the core steps used to establish the bounds in \eqref{eq:SC-bound},
\eqref{eq:WC-bound}, \eqref{eq:SC-thermal}, and \eqref{eq:WC-thermal}.

Recently, an important development in the theory of the teleportation
simulation of quantum Gaussian channels has taken place \cite{LMGA17}. In
particular, the authors of \cite{LMGA17}\ have shown that all single-mode,
phase-insensitive quantum Gaussian channels other than the pure-loss channel
can be simulated via the action of teleportation on a finite-energy Gaussian
resource state that has the same amount of entanglement as the Choi state of
the channel. In \cite{LMGA17}, the authors quantified the amount of
entanglement in the resource state using an entanglement monotone
\cite{H42007} called logarithmic negativity, which is the same entanglement
measure considered in \cite{NFC09}. In our paper, we show how the main idea of
their paper leads to strengthened bounds on the performance of
secret-key-agreement protocols conducted over single-mode phase-insensitive
bosonic Gaussian channels.

To describe the result of \cite{LMGA17} in more detail, let $X=\sqrt{\tau
}I_{2}$ and $Y=yI_{2}$ be the matrices representing the action of a
single-mode phase-insensitive Gaussian channel on an input state, as in
\eqref{eq:Gaussian-channel}. In what follows and as in \cite{LMGA17}, we
exclusively consider the case when $\tau\geq0$. In order for the map to be a
completely positive, trace-preserving map (i.e., a legitimate quantum
channel), the following inequality should hold \cite{S17}%
\begin{equation}
y\geq\left\vert 1-\tau\right\vert .
\end{equation}
The main contribution of \cite{LMGA17} is that every single-mode
phase-insensitive Gaussian channel in the above class, besides the pure-loss
channel, can be simulated by the action of a continuous-variable teleportation
protocol on a finite-energy, two-mode resource state with the same amount of
entanglement as the Choi state of the channel. An additional contribution of
\cite{LMGA17} is a converse bound:\ it is not possible to use a resource state
with logarithmic negativity smaller than that of the Choi state, in order to
simulate the channel. This follows directly from the facts that the
teleportation simulation protocol should simulate the channel, teleportation
is an LOCC, and logarithmic negativity is an entanglement monotone (it is
non-increasing with respect to an LOCC). This converse bound holds, by the
same argument, for all measures of entanglement (such as relative entropy of entanglement).

In more detail, the teleportation simulation of \cite{LMGA17} begins with the
sender and receiver of the channel sharing a two-mode Gaussian state in the
standard form in \eqref{eq:CM-standard-form}. The sender mixes the input of
the channel and her share of the resource state on a 50-50 beam splitter. The
sender then performs ideal homodyne detection of the position quadrature of
the first mode and ideal homodyne detection of the momentum quadrature of the
second mode, leading to measurement outcomes $Q_{+}$ and $P_{-}$. The sender
communicates these real values over ideal classical communication channels to
the receiver, and the receiver performs displacement operations of his mode by
$g\sqrt{2}Q_{+}$ and $g\sqrt{2}P_{-}$, for some $g\in\mathbb{R}$. The result
of all of these operations is to implement a quantum Gaussian channel of the
following form on the input state:%
\begin{align}
X  &  =gI_{2},\\
Y  &  =\left[  g^{2}a+2gc+b\right]  I_{2},
\end{align}
where we note the different sign convention from \cite[Eq.~(7)]{LMGA17}, due
to our slightly different convention for the standard form in
\eqref{eq:CM-standard-form}. If $g>0$, then the channel implemented is a
single-mode phase-insensitive Gaussian channel with%
\begin{equation}
\tau=g^{2},\qquad y=g^{2}a+2gc+b. \label{eq:TP-simulation-1-mode-G}%
\end{equation}
If $g<0$, then one can postprocess the output according to a unitary Gaussian
channel with $X=-I_{2}$ and $Y=0$ (a phase flip channel), such that the
overall channel is a single-mode phase-insensitive Gaussian channel with
$\tau$ and $y$ as in \eqref{eq:TP-simulation-1-mode-G}. A generalization of
these steps beyond two-mode states is given in \cite{WPG07}.

Where \cite{LMGA17} departs from prior works is to solve an inverse problem
regarding teleportation simulation. Given values of $\tau$ and $y$
corresponding to a physical channel different from the pure-loss channel, the
authors of \cite{LMGA17} proved that there exists a finite-energy, two-mode
Gaussian state in standard form satisfying \eqref{eq:TP-simulation-1-mode-G},
having its smaller symplectic eigenvalue equal to one, and having its
logarithmic negativity equal to that of the Choi state of the channel. It
should be stressed that the states found in \cite{LMGA17} have an analytical
form, which has to do with the form of the above constraints.

\subsection{Information quantities and bounds for secret-key-agreement
protocols}

The basic information quantities that we need in this paper are the quantum
relative entropy \cite{U62,Lindblad1973}, the relative entropy variance
\cite{li12,TH12}, and the hypothesis testing relative entropy \cite{BD10,WR12}%
. For two states $\rho$ and $\sigma$ defined on a separable Hilbert space with
the following spectral decompositions:%
\begin{align}
\rho &  =\sum_{x}\lambda_{x}|\phi_{x}\rangle\langle\phi_{x}%
|,\label{eq:spec-decomp-rho}\\
\sigma &  =\sum_{y}\mu_{y}|\psi_{y}\rangle\langle\psi_{y}|,
\label{eq:spec-decomp-sigma}%
\end{align}
the quantum relative entropy $D(\rho\Vert\sigma)$ \cite{Lindblad1973}\ and the
relative entropy variance $V(\rho\Vert\sigma)$\ \cite{li12,TH12} are defined
as
\begin{align}
D(\rho\Vert\sigma)  &  =\sum_{x,y}\left\vert \left\langle \psi_{y}|\phi
_{x}\right\rangle \right\vert ^{2}\lambda_{x}\log_{2}(\lambda_{x}/\mu_{y}),\\
V(\rho\Vert\sigma)  &  =\sum_{x,y}\left\vert \left\langle \psi_{y}|\phi
_{x}\right\rangle \right\vert ^{2}\lambda_{x}\left[  \log_{2}(\lambda_{x}%
/\mu_{y})-D(\rho\Vert\sigma)\right]  ^{2}.
\end{align}

For quantum Gaussian states, the quantities $D(\rho\Vert\sigma)$ \cite{PhysRevA.71.062320}, \cite{PLOB15} and
$V(\rho\Vert\sigma)$ \cite{BLTW16} can be expressed in terms of their first and second
moments. For simplicity, let us
suppose that $\rho$ and $\sigma$ are zero-mean quantum Gaussian states. Then
Refs.~\cite{PhysRevA.71.062320}, \cite{PLOB15} established that%
\begin{equation}
D(\rho\Vert\sigma)=\log_{2}(Z^{\sigma}/Z^{\rho})/2-\operatorname{Tr}\{\Delta
V^{\rho}\}/4\ln2, \label{eq:rel-ent-G}%
\end{equation}
where $\Delta=G^{\rho}-G^{\sigma}$, and Ref.~\cite{BLTW16} established that%
\begin{equation}
V(\rho\Vert\sigma)=\frac{1}{8\ln^{2}2}\left[  \operatorname{Tr}\{\Delta
V^{\rho}\Delta V^{\rho}\}+\operatorname{Tr}\{\Delta\Omega\Delta\Omega
\}\right]  . \label{eq:rel-ent-var-G}%
\end{equation}
In the above, we should note that our convention for normalization of
covariance matrices is what leads to the different constant prefactors when
compared to the expressions in \cite{PhysRevA.71.062320,PLOB15,BLTW16}.

The hypothesis testing relative entropy is defined as \cite{BD10,WR12}%
\begin{multline}
D_{H}^{\varepsilon}(\rho\Vert\sigma)=\\
-\log_{2}\inf_{\Lambda}\{\operatorname{Tr}\{\Lambda\sigma\}:0\leq\Lambda\leq
I\wedge\operatorname{Tr}\{\Lambda\rho\}\geq1-\varepsilon\}
\end{multline}
By the reasoning in \cite{DPR15} and Appendix~\ref{sec:AEP-hypo}, we have the
following bound holding for faithful states $\rho$ and $\sigma$ such that
$D(\rho\Vert\sigma),V(\rho\Vert\sigma),T(\rho\Vert\sigma)<\infty$ and
$V(\rho\Vert\sigma)>0$:%
\begin{multline}
D_{H}^{\varepsilon}(\rho^{\otimes n}\Vert\sigma^{\otimes n})\leq\\
nD(\rho\Vert\sigma)+\sqrt{nV(\rho\Vert\sigma)}\Phi^{-1}(\varepsilon)+O(\log
n), \label{eq:upper-bnd-hypo}%
\end{multline}
where \cite{li12,TH12}%
\begin{equation}
T(\rho\Vert\sigma)=\sum_{x,y}\left\vert \left\langle \psi_{y}|\phi
_{x}\right\rangle \right\vert ^{2}\lambda_{x}\left\vert \log_{2}\left(
\lambda_{x}/\mu_{y}\right)  -D(\rho\Vert\sigma)\right\vert ^{3},
\end{equation}
and
\begin{align}
\Phi(a)  &  =\frac{1}{\sqrt{2\pi}}\int_{-\infty}^{a} dx \ \exp\!\left(
\frac{-x^{2}}{2}\right) \\
\Phi^{-1}(\varepsilon)  &  =\sup\left\{  a\in\mathbb{R} \ |\ \Phi(a)
\leq\varepsilon\right\}  . \label{eq:Phi-inv}%
\end{align}
We note here that the finiteness of $T(\rho\Vert\sigma)$ for finite-energy,
faithful Gaussian states is essential to the main result of our paper.
Inspecting the proof given in Appendix~\ref{sec:AEP-hypo}, we see that the
condition $T(\rho\Vert\sigma) < \infty$ allows us to invoke the Berry-Esseen
theorem \cite{KS10,S11}, which in turn leads to the improved upper bound in
\eqref{eq:new-bound}.

The relative entropy of entanglement of a bipartite state $\rho_{AB}$ is
defined as follows \cite{VP98}:%
\begin{equation}
E_{R}(A;B)_{\rho}=\inf_{\sigma_{AB}\in\operatorname{SEP}(A:B)}D(\rho_{AB}%
\Vert\sigma_{AB}),
\end{equation}
where $\operatorname{SEP}(A\!:\!B)$ denotes the set of separable (unentangled)
states \cite{W89}. Analogously, we have the $\varepsilon$-relative entropy of
entanglement \cite{BD11}:%
\begin{equation}
E_{R}^{\varepsilon}(A;B)_{\rho}=\inf_{\sigma_{AB}\in\operatorname{SEP}%
(A:B)}D_{H}^{\varepsilon}(\rho_{AB}\Vert\sigma_{AB}).
\label{eq:eps-rel-entr-enta}%
\end{equation}
For a two-mode Gaussian state $\rho_{AB}$ in standard form, one can always
choose the separable state $\sigma_{AB}^{\prime}$ to be in standard form with
the same values for $a$ and $b$ but with $c$ chosen to saturate the inequality
in \eqref{eq:c_sep}, such that $c=c_{\text{sep}}$ \cite{PLOB15}. By
definition, for this suboptimal choice, we have that%
\begin{align}
E_{R}(A;B)_{\rho}  &  \leq D(\rho_{AB}\Vert\sigma_{AB}^{\prime}),\\
E_{R}^{\varepsilon}(A;B)_{\rho}  &  \leq D_{H}^{\varepsilon}(\rho_{AB}%
\Vert\sigma_{AB}^{\prime}),
\end{align}
and this is the choice made in \cite{PLOB15,WTB16}\ to arrive at various upper
bounds on secret-key-agreement capacity. In what follows, we refer to
$D(\rho_{AB}\Vert\sigma_{AB}^{\prime})$ as the suboptimal relative entropy of
entanglement of $\rho_{AB}$.

In \cite[Eq.~(4.34)]{WTB16}, the following bound was established on the
non-asymptotic secret-key-agreement capacity of a channel $\mathcal{N}$\ that
is teleportation simulable with associated resource state $\omega_{AB}$:%
\begin{equation}
\label{eqn:hyptest}P_{\mathcal{N}}^{\leftrightarrow}(n,\varepsilon)\leq
\frac{1}{n}E_{R}^{\varepsilon}(A^{n};B^{n})_{\omega^{\otimes n}}\leq\frac
{1}{n}D_{H}^{\varepsilon}(\omega_{AB}^{\otimes n}\Vert\sigma_{AB}^{\otimes
n}).
\end{equation}
The argument for the first inequality critically relies upon the connection
between secret-key-agreement protocols and private-state distillation
protocols established in \cite{HHHO05,HHHO09}\ and some other results
contained therein, in addition to the teleportation reduction argument
discussed in Section~\ref{sec:tp-simulation}. The second inequality follows
from the definition in \eqref{eq:eps-rel-entr-enta}, with $\sigma_{AB}$ being
an arbitrary separable state. Thus, any resource state for the teleportation
simulation of a channel can be used to give an upper bound on its
non-asymptotic secret-key-agreement capacity. In particular, if $\omega_{AB}$
and $\sigma_{AB}$ are faithful quantum Gaussian states of finite energy such
that $\omega_{AB}\neq\sigma_{AB}$, then the conditions $D(\omega_{AB}%
\Vert\sigma_{AB}),V(\omega_{AB}\Vert\sigma_{AB}),T(\omega_{AB}\Vert\sigma
_{AB})<\infty$ and $V(\omega_{AB}\Vert\sigma_{AB})>0$ hold, such that
\eqref{eq:upper-bnd-hypo} applies and we find that%
\begin{multline}
P_{\mathcal{N}}^{\leftrightarrow}(n,\varepsilon)\leq D(\omega_{AB}\Vert
\sigma_{AB})+\sqrt{\frac{V(\omega_{AB}\Vert\sigma_{AB})}{n}}\Phi
^{-1}(\varepsilon)\label{eq:tp-simul-bound}\\
+O\!\left(  \frac{\log n}{n}\right)  .
\end{multline}
The quantities $D(\omega_{AB}\Vert\sigma_{AB})$ and $V(\omega_{AB}\Vert
\sigma_{AB})$ are finite for faithful quantum Gaussian states of finite
energy, which holds by inspecting \eqref{eq:rel-ent-G} and
\eqref{eq:rel-ent-var-G}, and in Appendix~\ref{sec:finiteness}, we argue that
the quantity $T(\omega_{AB}\Vert\sigma_{AB})$ is finite as well.

Note that both \eqref{eq:SC-thermal} and \eqref{eq:new-bound} can be derived
from \eqref{eqn:hyptest}. The point of deviation in the two derivations is
that it is possible, on the one hand, to invoke the Berry--Esseen theorem
\cite{KS10,S11} in order to arrive at \eqref{eq:new-bound}, due to the results
of \cite{LMGA17} and our arguments in Appendices~\ref{sec:AEP-hypo} and
\ref{sec:finiteness}. That is, \cite{LMGA17} showed how to perform
teleportation simulation of a single-mode phase-insensitive thermal bosonic
channel using a finite-energy resource state, and our
Appendix~\ref{sec:finiteness} argues how $T(\omega_{AB}\Vert\sigma_{AB})$ is
finite for finite-energy Gaussian states. Thus, the Berry--Esseen theorem can be
invoked as shown in Appendix~\ref{sec:AEP-hypo} and so
\eqref{eq:upper-bnd-hypo} applies. On the other hand, for the derivation of
\eqref{eq:SC-thermal}, the ideal infinite-energy Choi state of the channel is
used as the resource state, but it is not known if $T(\omega_{AB}\Vert
\sigma_{AB})$ is finite in such a scenario. Hence, unless this is proven, we
cannot invoke \eqref{eq:upper-bnd-hypo}. Therefore, other techniques, such as
the Chebyshev inequality, were used in \cite{WTB16} to arrive at~\eqref{eq:SC-thermal}.

\section{Methods}

Given the background reviewed above, we are now in a position to discuss the
main contribution of our paper. We modify the finite-energy teleportation
simulation approach of \cite{LMGA17} in the following way:\ Given a thermal
channel with $\tau=\eta$ and $y=(1-\eta)(2N_{B}+1)$, we find a finite-energy,
two-mode Gaussian state in standard form such that

\begin{enumerate}
\item it satisfies \eqref{eq:TP-simulation-1-mode-G},

\item its smaller symplectic eigenvalue is just larger than one, and

\item its suboptimal relative entropy of entanglement is equal to the
suboptimal relative entropy of entanglement of the Choi state of the channel,
the latter of which is given by \eqref{eq:rel-ent-thermal}.
\end{enumerate}

\noindent Any resource state that simulates the channel should satisfy the
first constraint. We impose the second constraint to ensure that the state we
find is a faithful Gaussian state, such that its relative entropy and relative
entropy variance to a separable Gaussian state can be easily evaluated using
the formulas in \eqref{eq:rel-ent-G} and \eqref{eq:rel-ent-var-G}.
As discussed above, the relative entropy of entanglement of the resource state
should at least be equal to that of the Choi state, in order to simulate a
channel. In order to ensure that we find a good upper bound on the
secret-key-agreement capacity, we have imposed the third constraint on
suboptimal relative entropy of entanglement.
We find these states by numerically solving the above constraints with the aid
of a computer program \footnote{Mathematica files are available in the source
files of our arXiv post.}, and we remark that finding an analytical solution
in this case appears to be far more complicated than for the case from
\cite{LMGA17}, due to the fact that the suboptimal relative entropy of
entanglement is a much more complicated function of the covariance matrix
elements. In some cases, it is possible to find multiple solutions for the
states that satisfy these constraints. For our purpose, any of these states
can be chosen. We also note that the flexibility afforded by having a
teleportation simulation with negative gain~$g$, as discussed in
Section~\ref{sec:tp-simulation}, is critical for us to solve these
constraints by numerical search. With these finite-energy states in hand, we
then numerically compute the relative entropy variance in \eqref{eq:rel-ent-var-G} and can
apply the bound in \eqref{eq:tp-simul-bound}.

\begin{figure*}[ptb]
\begin{center}
\subfloat[]{\includegraphics[width=.85\columnwidth]{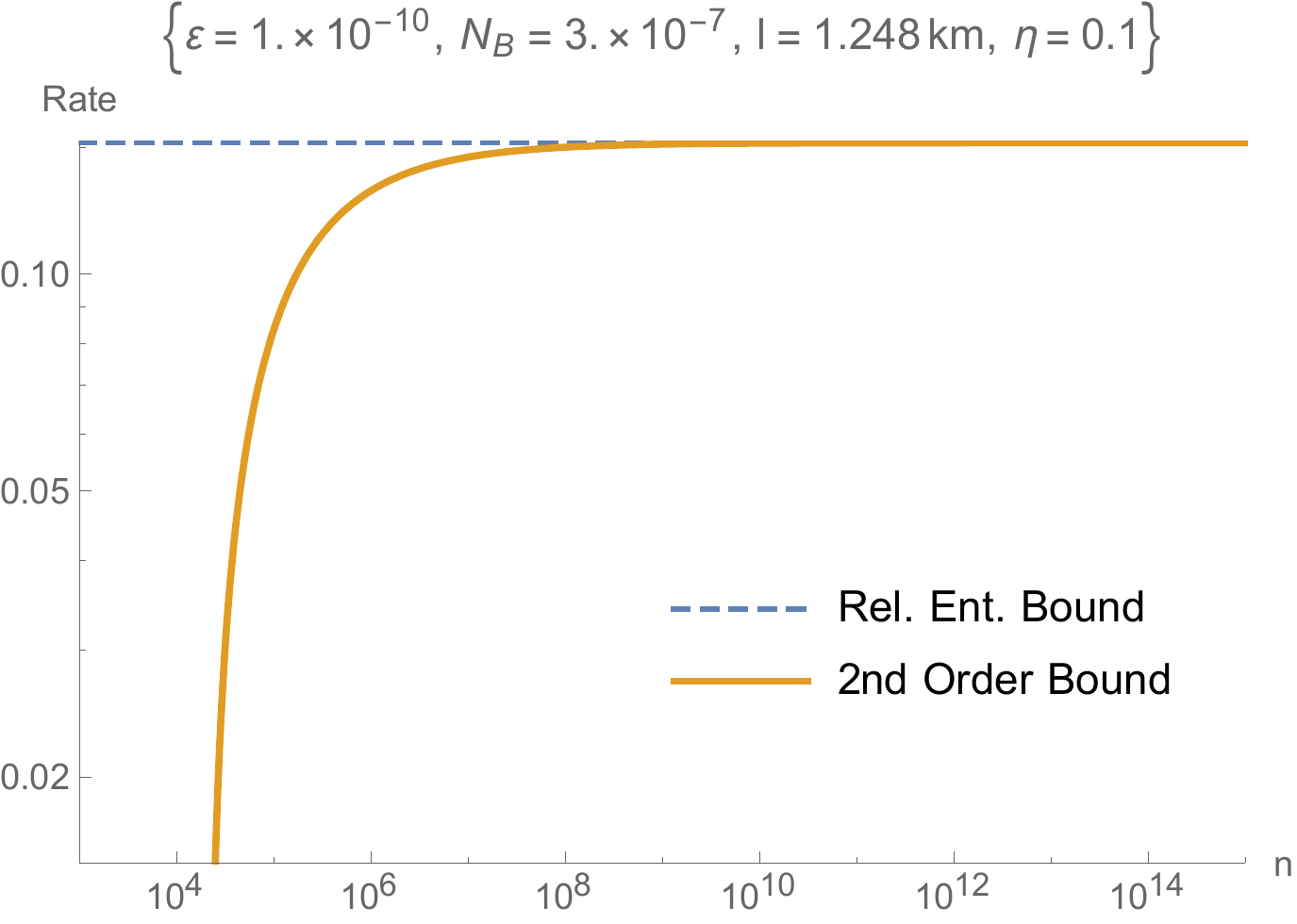}}\qquad
\qquad\subfloat[]{\includegraphics[width=.85\columnwidth]{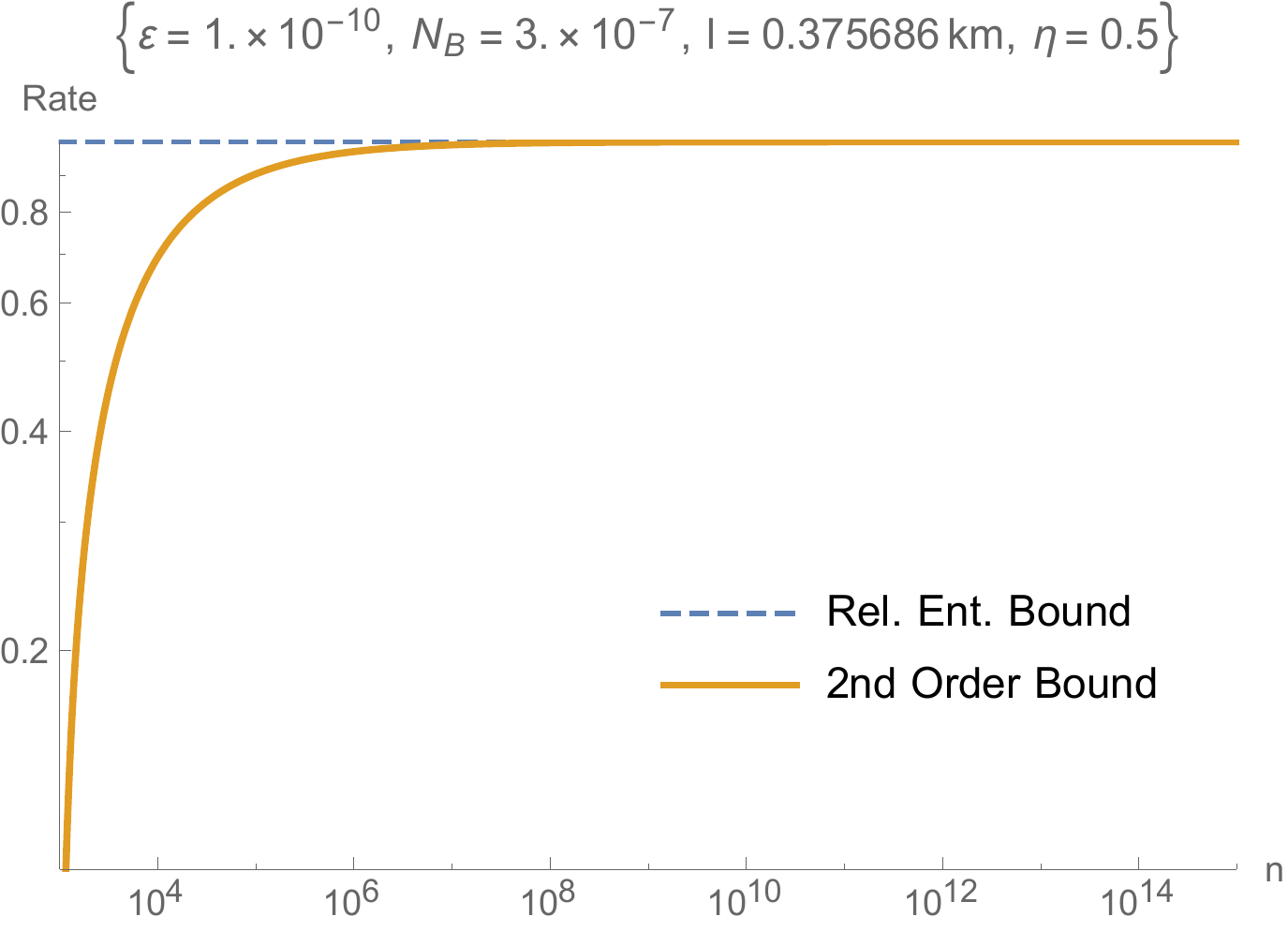}}\newline%
\subfloat[]{\includegraphics[width=.85\columnwidth]{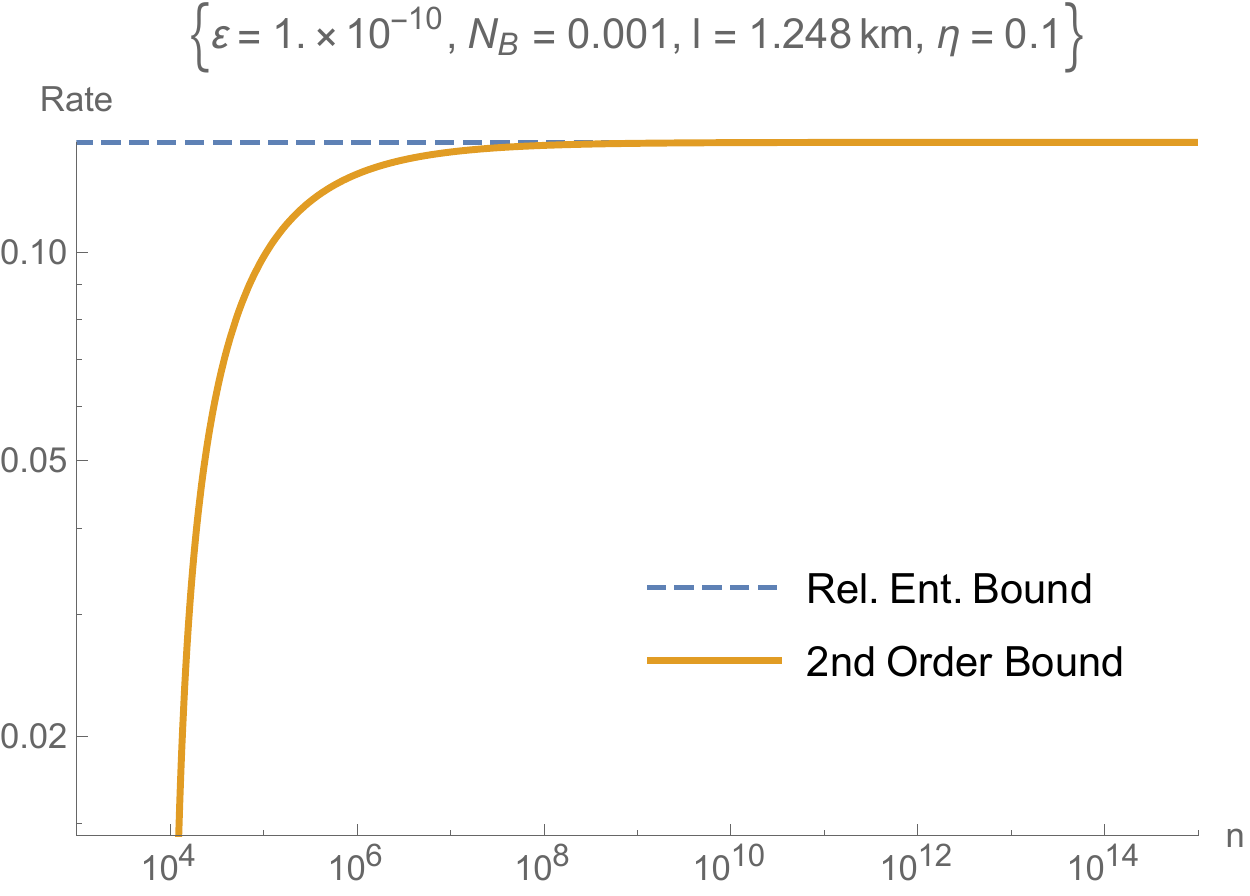}}\qquad
\qquad\subfloat[]{\includegraphics[width=.85\columnwidth]{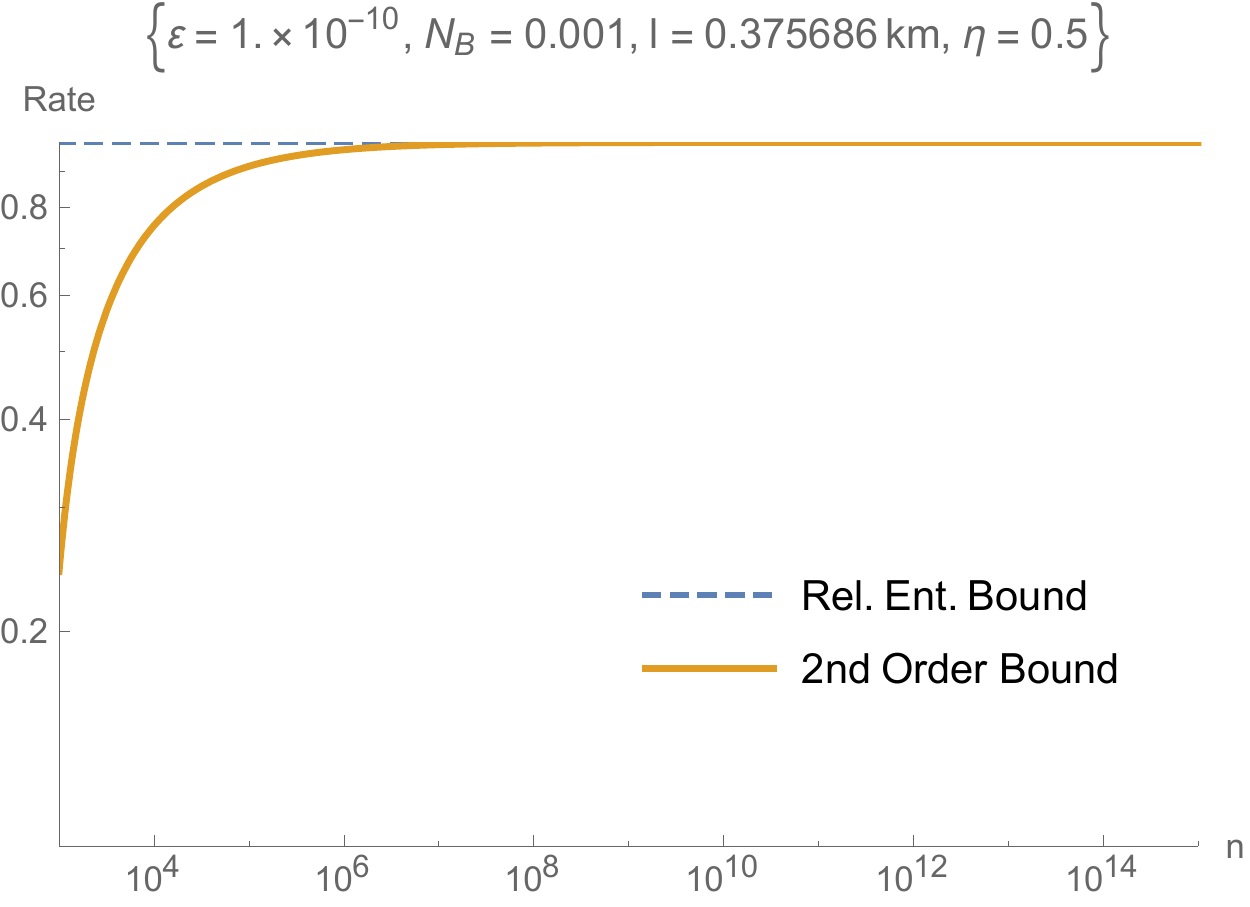}}\newline%
\subfloat[]{\includegraphics[width=.85\columnwidth]{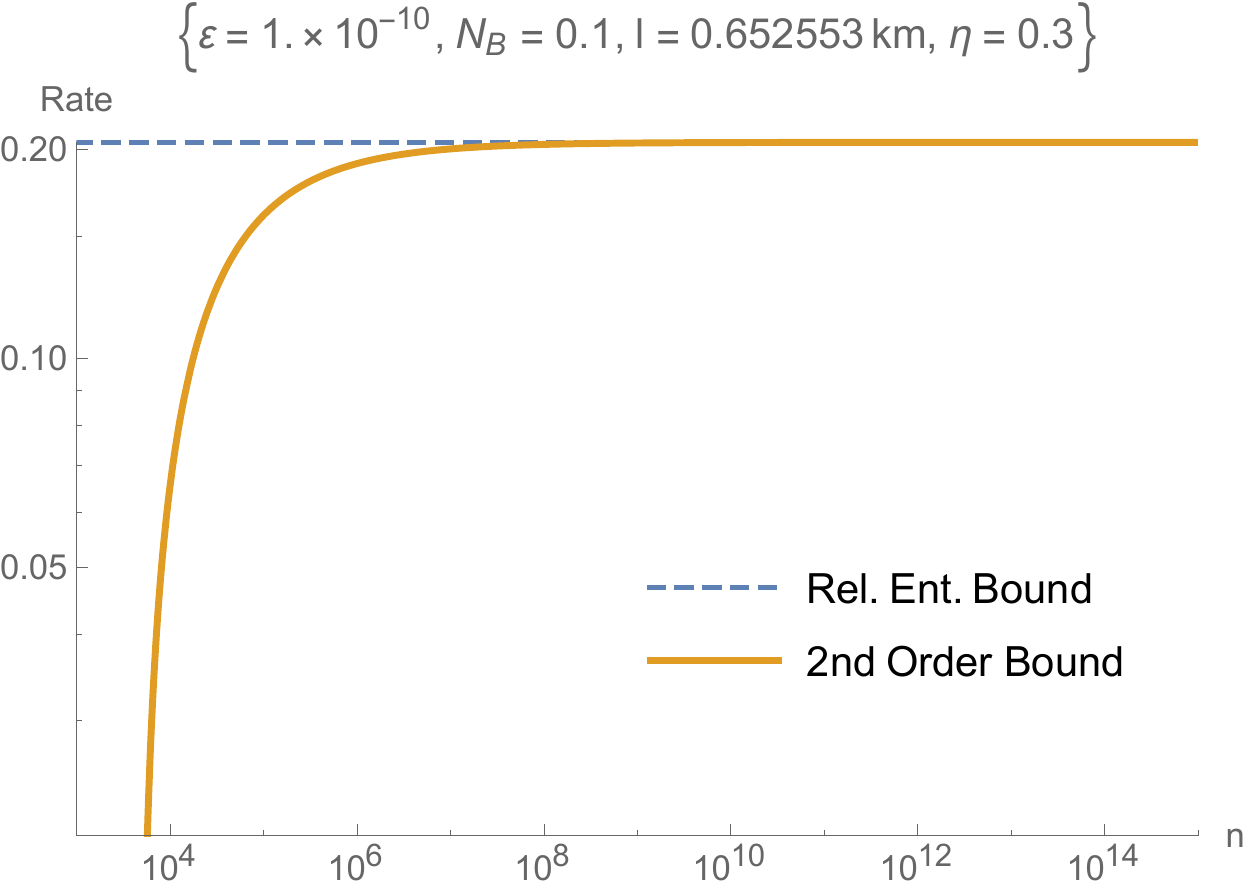}}\qquad
\qquad\subfloat[]{\includegraphics[width=.85\columnwidth]{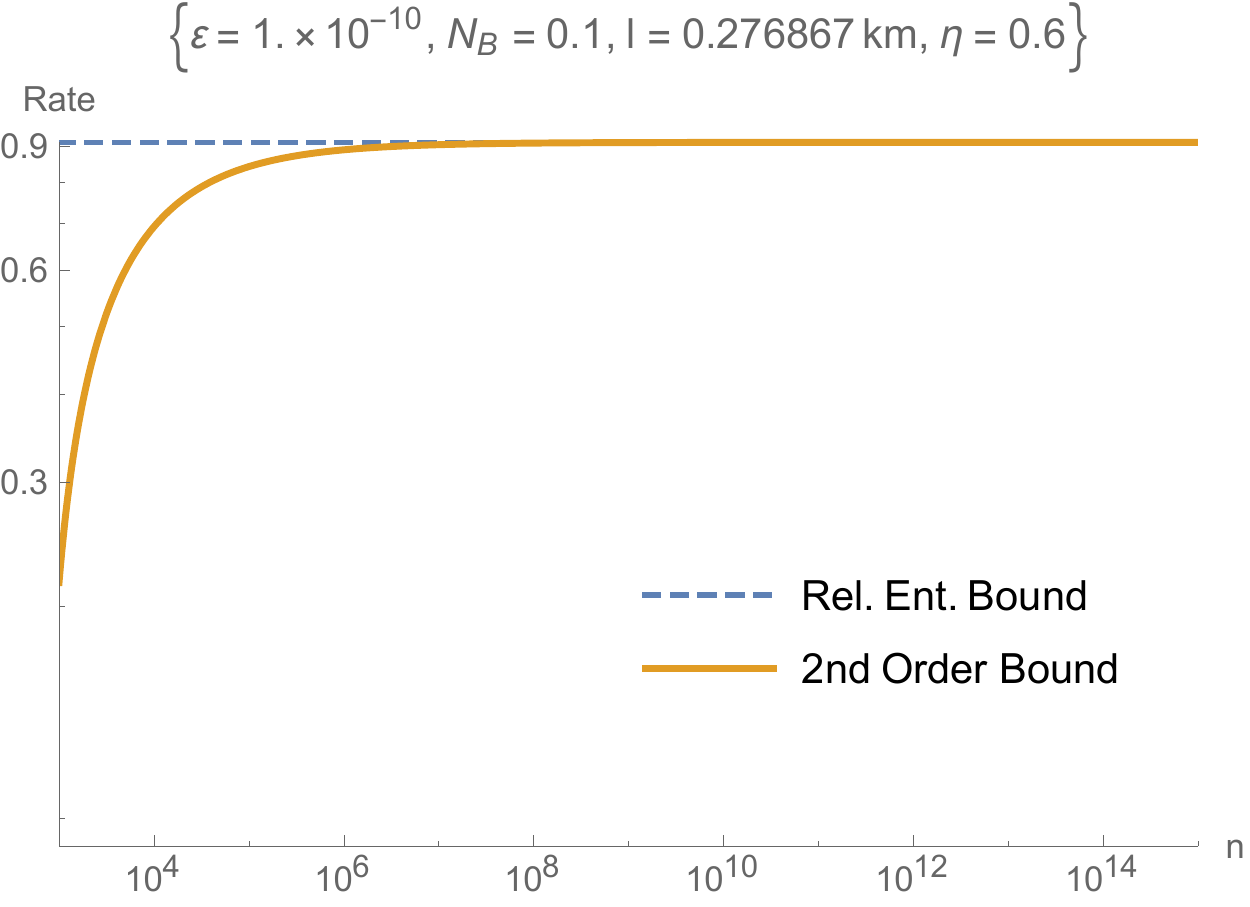}}\newline
\end{center}
\caption{The figures plot upper bounds on the non-asymptotic
secret-key-agreement capacity of the thermal channels of transmissivity
$\eta\in(0,1)$ and thermal mean photon number $N_{B}>0$, given by the
second-order approximation from \eqref{eq:new-bound}. In each figure, we
select certain values of $\eta$ (corresponding to a certain distance $L$ via
$\eta=\exp[-L/L_{0}]$) and $N_{B}$, with the choices indicated above each
figure. In all cases, we take the conservative value of $\varepsilon=10^{-10}$
as indicated in \cite{TCGR12}. Each figure indicates that the asymptotic
secret-key-agreement capacity is too pessimistic of a benchmark for
demonstrating a quantum repeater when using the channel a finite number of
times. That is, there is an appreciable difference between the asymptotic and
non-asymptotic secret-key-agreement capacity.}%
\label{fig:results}%
\end{figure*}

\section{Results}

\label{sec:results} In Figure~\ref{fig:results}, we plot upper bounds on the
asymptotic secret-key-agreement capacity of the thermal channel given by
\eqref{eq:rel-ent-thermal} (dashed line) and upper bounds on the
non-asymptotic secret-key-agreement capacity given by~\eqref{eq:new-bound}
(solid line) versus the number of channel uses. It is important to stress that
the latter bound is only an approximation (known as the normal approximation)
if $n$ is not sufficiently large (i.e., $n$ should be proportional to
$1/\varepsilon^{2}$ in order for the bounds to really apply). At the same
time, many prior works have shown that the normal approximation is an
excellent approximation for non-asymptotic capacities even for small $n$
\cite{polyanskiy10,P10,MW12,P13,TBR15}. In each case, we choose the
key-quality parameter $\varepsilon$ to be $10^{-10}$, in accordance with the
same conservative value chosen in \cite{TCGR12}. In the plots, we select
$\eta\in(0,1)$, (hence the corresponding distance $L$) and the thermal mean
photon number $N_{B}>0$ as indicated above each figure. The distance $L$ can
be related to the transmissivity $\eta$ of the thermal channel as $\eta
=\exp[-L/L_{0}]$, where $L_{0}$ is the fiber attenuation length \cite{RGRKVRHWE17}. In the plots,
we consider $L_{0}=0.542$~km \cite{RGRKVRHWE17}. The thermal mean
photon number $N_{B}$ relevant in experimental contexts, whenever thermal
noise is due exclusively to dark counts, is given by the dark counts per
second times the integration period $t_{\mathrm{int}}$. In the plots, the
lowest $N_{B}$ we consider corresponds to a dark count rate of $10$ per
second and $t_{\mathrm{int}}=30$~ns \cite[Section~VI]%
{RGRKVRHWE17}. For completeness, we also consider higher values of
$N_{B}$, which could occur due to excessive background thermal radiation or
tampering by an eavesdropper.

As noted in the introduction of our paper, these upper bounds can be
interpreted to serve as benchmarks for quantum repeaters \cite{L15}. That is,
the upper bounds on secret-key-agreement capacity hold for any protocol that
uses the channel and LOCC\ but is not allowed to use a quantum repeater. As
such, exceeding these upper bounds constitutes a demonstration of a quantum
repeater \cite{L15}. What our results indicate is that the previous upper
bounds from \cite{PLOB15,WTB16}\ on the asymptotic secret-key-agreement
capacity are too pessimistic of a benchmark for protocols that are only using
the channel a finite number of times. As such, the burden of demonstrating 
a quantum repeater is now somewhat relieved in comparison to what was
previously thought would be necessary.

From an experimental perspective, it could be of interest to perform a test
using the results of our paper in order to demonstrate a working quantum
repeater. A convincing approach for doing so would be to conduct an actual
secret-key-distillation protocol over some finite number of uses of the
channel and determine what secret-key rates can be achieved. \cite[Section~IV]%
{RGRKVRHWE17} details methods for determining secret-key rates that are
achievable in particular physical setups. For a given rate and number of
channel uses, one can then compare the results with our plots (or other plots
generated via the same method for different parameter values) to determine if the rate is achieved is larger
than the upper bounds in our plots; if it is the case, then one can claim a
working quantum repeater, albeit with the understanding that our upper bounds are
the normal approximations of the true finite-length upper bounds (as discussed previously). This approach is to be
contrasted with those that estimate the quantum bit-error rate from just a few
channel uses and then use this parameter to calculate an asymptotic key rate
(see the review in \cite{SBCDLP09} for discussions of such approaches).

\section{Conclusion}

In this paper, we showed how to extend the teleportation simulation method of
\cite{LMGA17} to the relative entropy of entanglement measure. By combining
with prior results in \cite{WTB16} regarding non-asymptotic
secret-key-agreement capacity, this extension leads to improved bounds on the
non-asymptotic secret-key-agreement capacity of a thermal bosonic channel, in
certain parameter regimes. Given that upper bounds on secret-key-agreement
capacity have been advocated as a way to assess the performance of a quantum
repeater, our results indicate that previous bounds from \cite{PLOB15,WTB16}
are too pessimistic, and it should be somewhat easier to demonstrate a working
quantum repeater in the realistic regime of a finite number of channel uses.

We remark that our approach can be extended to quantum amplifier channels, but
we did not discuss these channels in any detail because they appear to be most
prominently physically relevant in exotic relativistic communication scenarios
\cite{bradler2012quantum,bradler2015black,QW16}. Our approach also applies to single-mode additive-noise Gaussian channels.

Going forward from here, it would be interesting to generalize our results to
multimode bosonic communication channels \cite{CEGH08} or channels that are
not phase-insensitive. As discussed previously \cite{TSW16,TSW17,WTB16}, it
would also be good to determine bounds on performance when there is an average
energy constraint at the input of each channel use. One should expect to find
improved upper bounds due to this extra constraint.

\begin{acknowledgments}
We are grateful to Gerardo Adesso, Boulat Bash, Mario Berta, Zachary Dutton,
Jens Eisert, Saikat Guha, Jonathan P. Dowling, Jeffrey H. Shapiro, and Marco
Tomamichel for discussions. We also thank the anonymous referees for their constructive comments that helped to improve our paper. We acknowledge support from the Office of Naval\ Research.
\end{acknowledgments}

\appendix

\section{Little room for improving the strong converse bound in
\eqref{eq:SC-bound}}

\label{sec:disp-rev-coh-info}Here we argue why we think it will not be
possible to improve upon the upper bound in \eqref{eq:SC-bound}, up to
lower-order terms. Before proceeding, recall that the conditional quantum
entropy and conditional entropy variance \cite{TH12} are\ defined for a
bipartite state $\rho_{AB}$ as%
\begin{align}
H(A|B)_{\rho}  &  \equiv-D(\rho_{AB}\Vert I_{A}\otimes\rho_{B}),\\
V(A|B)_{\rho}  &  \equiv V(\rho_{AB}\Vert I_{A}\otimes\rho_{B}).
\end{align}
The coherent information is defined as $I(A\rangle B)_{\rho}\equiv
-H(A|B)_{\rho}$ \cite{PhysRevA.54.2629}\ and its corresponding variance is
$V(A\rangle B)_{\rho}\equiv V(A|B)_{\rho}$. In \cite[Section~6.2]{WTB16}, the
following achievability bound was established for $\mathcal{N}$ a
finite-dimensional channel:%
\begin{equation}
P_{\mathcal{N}}^{\leftrightarrow}(n,\varepsilon)\geq I_{\operatorname{rev}%
}(\mathcal{N})+\sqrt{\frac{V_{\operatorname{rev}}^{\varepsilon}(\mathcal{N)}%
}{n}}\Phi^{-1}(\varepsilon)+O\!\left(  \frac{\log n}{n}\right)
\label{eq:rev-low-bound}%
\end{equation}
where $I_{\operatorname{rev}}(\mathcal{N})$ is the following quantity
\cite[Section~5.3]{DJKR06} (sometimes called the channel's reverse coherent
information):%
\begin{equation}
I_{\operatorname{rev}}(\mathcal{N})\equiv\max_{|\psi\rangle_{AA^{\prime}}%
\in\mathcal{H}_{AA^{\prime}}}I(B\rangle A)_{\theta}, \label{eq:rev-coh-info}%
\end{equation}
$\theta_{AB}\equiv\mathcal{N}_{A^{\prime}\rightarrow B}(\psi_{AA^{\prime}})$,
and $V_{\operatorname{rev}}^{\varepsilon}(\mathcal{N)}$ is the channel's
reverse conditional entropy variance:%
\begin{equation}
V_{\operatorname{rev}}^{\varepsilon}(\mathcal{N)\equiv}\left\{
\begin{array}
[c]{cc}%
\min_{\psi_{AA^{\prime}}\in\Pi_{\operatorname{rev}}}V(B\rangle A)_{\theta} &
\text{for }\varepsilon<1/2\\
\max_{\psi_{AA^{\prime}}\in\Pi_{\operatorname{rev}}}V(B\rangle A)_{\theta} &
\text{for }\varepsilon\geq1/2
\end{array}
\right.  .
\end{equation}
The set $\Pi_{\operatorname{rev}}\subseteq\mathcal{D}(\mathcal{H}_{AA^{\prime
}})$ is the set of all states achieving the maximum in \eqref{eq:rev-coh-info}.

The inequality in \eqref{eq:rev-low-bound} follows from a one-shot coding
theorem \cite[Proposition~21]{WTB16}, followed by an expansion of the
hypothesis testing relative entropy as \cite{li12,TH12}%
\begin{multline}
D_{H}^{\varepsilon}(\rho^{\otimes n}\Vert\sigma^{\otimes n})\geq\\
nD(\rho\Vert\sigma)+\sqrt{nV(\rho\Vert\sigma)}\Phi^{-1}(\varepsilon)+O(\log
n).
\end{multline}
A critical step employed in the above expansion is the Berry--Esseen theorem
\cite{KS10,S11}. Rather than employing the Berry--Esseen theorem, we can
modify the proof of Theorem~2\ in \cite{li12}\ (therein instead picking
$L_{n}=\exp( nD(\rho\Vert\sigma)-\sqrt{n V(\rho\Vert\sigma)/\varepsilon}) $)
to employ the Chebyshev inequality and instead find the following expansion:%
\begin{equation}
D_{H}^{\varepsilon}(\rho^{\otimes n}\Vert\sigma^{\otimes n})\geq nD(\rho
\Vert\sigma)-\sqrt{\frac{nV(\rho\Vert\sigma)}{\varepsilon}}.
\label{eq:chebyshev-expand}%
\end{equation}
For these theorems to hold in separable infinite-dimensional Hilbert spaces,
it remains to show how to connect the coding theorem in \cite[Proposition~21]%
{WTB16} to the inequality in \eqref{eq:chebyshev-expand}, but we strongly
suspect that this should be possible. If everything holds, we would obtain the
following achievability theorem for an infinite-dimensional channel
$\mathcal{N}$:%
\begin{equation}
P_{\mathcal{N}}^{\leftrightarrow}(n,\varepsilon)\geq I_{\operatorname{rev}%
}(\mathcal{N})-\sqrt{\frac{V_{\operatorname{rev}}^{\varepsilon}(\mathcal{N)}%
}{n\varepsilon}}+O\!\left(  \frac{1}{n}\right)  .
\end{equation}
The above would hold for all finite-energy two-mode, squeezed vacuum states
passed through the channel, and one could then take a limit as the photon
number approaches infinity. The term $I_{\operatorname{rev}}(\mathcal{N})$
converges to $-\log_{2}(1-\eta)$ \cite{PLOB15}. Below we show that the
relative entropy variance $V_{\operatorname{rev}}^{\varepsilon}(\mathcal{N)}$
term converges to zero. This would then give the following bound
\begin{equation}
P_{\mathcal{L}_{\eta}}^{\leftrightarrow}(n,\varepsilon)\geq-\log_{2}%
(1-\eta)+O\!\left(  \frac{1}{n}\right)  ,
\end{equation}
leading us to our conclusion that there is little room for improving the upper
bound in \eqref{eq:SC-bound}. We stress that this remains to be worked out in detail.

We now evaluate the variance for the reverse coherent information when sending
in a two-mode squeezed vacuum to a pure-loss channel of transmissivity
$\eta\in(0,1)$. Recall that the quantity of interest is%
\begin{align}
&  V(B\rangle A)\nonumber\\
&  =\operatorname{Tr}\{\rho_{AB}\left[  \log\rho_{AB}-\log\rho_{A}\right]
^{2}\}\\
&  \qquad-\left[  H(AB)_{\rho}-H(A)_{\rho}\right]  ^{2}\\
&  =\operatorname{Tr}\{\rho_{AB}\left[  \log\rho_{AB}\right]  ^{2}%
\}\nonumber\\
&  \qquad-2\operatorname{Tr}\{\rho_{AB}\log\rho_{AB}\log\rho_{A}%
\}+\operatorname{Tr}\{\rho_{AB}\left[  \log\rho_{A}\right]  ^{2}\}\nonumber\\
&  \qquad-\left[  H(AB)_{\rho}-H(A)_{\rho}\right]  ^{2}\\
&  =\operatorname{Tr}\{\rho_{AB}\left[  \log\rho_{AB}\right]  ^{2}%
\}-H(AB)_{\rho}^{2}\nonumber\\
&  \qquad-2\left[  \operatorname{Tr}\{\rho_{AB}\log\rho_{AB}\log\rho
_{A}\}-H(A)_{\rho}H(AB)_{\rho}\right] \nonumber\\
&  \qquad+\operatorname{Tr}\{\rho_{A}\left[  \log\rho_{A}\right]
^{2}\}-H(A)_{\rho}^{2}.
\end{align}
The first and last terms we can evaluate easily using the following formula
for the entropy variance of a thermal state with mean photon number $N_{S}$
\cite[Appendix~A]{WRG15}:%
\begin{equation}
V(N_{S})=N_{S}\left(  N_{S}+1\right)  \left[  \log\left(  1+\frac{1}{N_{S}%
}\right)  \right]  ^{2}.
\end{equation}
For the first, using the notion of purification, purifying with $\psi_{ABE}$,
and observing that $\psi_{E}$ is a thermal state with mean photon number
$\left(  1-\eta\right)  N_{S}$, we find that%
\begin{align}
&  \operatorname{Tr}\{\rho_{AB}\left[  \log\rho_{AB}\right]  ^{2}%
\}-H(AB)_{\rho}^{2}\nonumber\\
&  =\operatorname{Tr}\{\psi_{E}\left[  \log\psi_{E}\right]  ^{2}\}-H(E)_{\psi
}^{2}\\
&  =\left(  1-\eta\right)  N_{S}\left(  \left(  1-\eta\right)  N_{S}+1\right)
\left[  \log\left(  1+\frac{1}{\left(  1-\eta\right)  N_{S}}\right)  \right]
^{2}.
\end{align}
For the last term, we observe that $\rho_{A}$ is a thermal state with mean
photon number $N_{S}$, which implies that%
\begin{multline}
\operatorname{Tr}\{\rho_{A}\left[  \log\rho_{A}\right]  ^{2}\}-H(A)_{\rho}%
^{2}\\
=N_{S}\left(  N_{S}+1\right)  \left[  \log\left(  1+\frac{1}{N_{S}}\right)
\right]  ^{2}.
\end{multline}
So it remains to handle the middle term. Consider that%
\begin{align}
&  \operatorname{Tr}\{\rho_{AB}\log\rho_{AB}\log\rho_{A}\}\nonumber\\
&  =\operatorname{Tr}\{\psi_{ABE}\log\rho_{AB}\log\rho_{A}\}\\
&  =\operatorname{Tr}\{\psi_{ABE}\log\psi_{E}\log\rho_{A}\}\\
&  =\operatorname{Tr}\{\psi_{AE}\log\psi_{E}\log\rho_{A}\}.
\end{align}
Consider that we can write%
\begin{align}
\psi_{E}  &  =\left[  \left(  1-\eta\right)  N_{S}+1\right]  ^{-1}\left(
1+\frac{1}{\left(  1-\eta\right)  N_{S}}\right)  ^{-\hat{n}_{E}},\\
\rho_{A}  &  =\left[  N_{S}+1\right]  ^{-1}\left(  1+\frac{1}{N_{S}}\right)
^{-\hat{n}_{A}},
\end{align}
where $\hat{n}_{E}$ and $\hat{n}_{A}$ are the number operators. This means
that%
\begin{align}
&  \operatorname{Tr}\{\psi_{AE}\log\psi_{E}\log\rho_{A}\}\nonumber\\
&  =\operatorname{Tr}\Bigg\{\psi_{AE}\log\left[  \left[  \left(
1-\eta\right)  N_{S}+1\right]  ^{-1}\left(  1+\frac{1}{\left(  1-\eta\right)
N_{S}}\right)  ^{-\hat{n}_{E}}\right] \nonumber\\
&  \qquad\times\log\left[  \left[  N_{S}+1\right]  ^{-1}\left(  1+\frac
{1}{N_{S}}\right)  ^{-\hat{n}_{A}}\right]  \Bigg\}\\
&  =\operatorname{Tr}\Bigg\{\psi_{AE}\log\left[  \left[  \left(
1-\eta\right)  N_{S}+1\right]  \left(  1+\frac{1}{\left(  1-\eta\right)
N_{S}}\right)  ^{\hat{n}_{E}}\right] \nonumber\\
&  \qquad\times\log\left[  \left[  N_{S}+1\right]  \left(  1+\frac{1}{N_{S}%
}\right)  ^{\hat{n}_{A}}\right]  \Bigg\}
\end{align}%
\begin{multline}
=\log\left[  \left(  1-\eta\right)  N_{S}+1\right]  \log\left[  N_{S}+1\right]
\\
+\log\left[  \left[  \left(  1-\eta\right)  N_{S}+1\right]  \right]
\log\left[  \left(  1+\frac{1}{N_{S}}\right)  \right]  \operatorname{Tr}%
\left\{  \psi_{AE}\hat{n}_{A}\right\} \\
+\log\left[  1+\frac{1}{\left(  1-\eta\right)  N_{S}}\right]  \log\left[
N_{S}+1\right]  \operatorname{Tr}\left\{  \psi_{AE}\hat{n}_{E}\right\} \\
+\log\left[  1+\frac{1}{\left(  1-\eta\right)  N_{S}}\right]  \log\left[
1+\frac{1}{N_{S}}\right]  \operatorname{Tr}\left\{  \psi_{AE}\left(  \hat
{n}_{A}\otimes\hat{n}_{E}\right)  \right\}
\end{multline}%
\begin{multline}
=\log\left[  \left(  1-\eta\right)  N_{S}+1\right]  \log\left[  N_{S}+1\right]
\\
+N_{S}\log\left[  \left[  \left(  1-\eta\right)  N_{S}+1\right]  \right]
\log\left[  \left(  1+\frac{1}{N_{S}}\right)  \right] \\
+\left(  1-\eta\right)  N_{S}\log\left[  1+\frac{1}{\left(  1-\eta\right)
N_{S}}\right]  \log\left[  N_{S}+1\right] \\
+\log\left[  1+\frac{1}{\left(  1-\eta\right)  N_{S}}\right]  \log\left[
1+\frac{1}{N_{S}}\right]\times\\  \operatorname{Tr}\left\{  \psi_{AE}\left(  \hat
{n}_{A}\otimes\hat{n}_{E}\right)  \right\}  .
\end{multline}
We note that the third equality follows by applying the identity $\log(a
b^{\hat{x}}) = \log(a) + \hat{x}\log(b)$ for positive scalars $a$ and $b$ and
a positive operator $\hat{x}$. So we need to evaluate the term
$\operatorname{Tr}\left\{  \psi_{AE}\left(  \hat{n}_{A}\otimes\hat{n}%
_{E}\right)  \right\}  $. Consider that sending a number state $|n\rangle
\langle n|$ through a beamsplitter of transmissivity $1-\eta$ leads to the
following transformation:%
\begin{equation}
|n\rangle\langle n|_{A^{\prime}}\rightarrow\sum_{k=0}^{n}\binom{n}{k}\left(
1-\eta\right)  ^{k}\eta^{n-k}|k\rangle\langle k|_{E}.
\end{equation}
The two-mode squeezed vacuum at the input has the following form:%
\begin{equation}
\frac{1}{\sqrt{N_{S}+1}}\sum_{n=0}^{\infty}\sqrt{\left(  \frac{N_{S}}{N_{S}%
+1}\right)  ^{n}}|n\rangle_{A}|n\rangle_{A^{\prime}}.
\end{equation}
However since we are evaluating $\operatorname{Tr}\left\{  \psi_{AE}\left(
\hat{n}_{A}\otimes\hat{n}_{E}\right)  \right\}  $, and $\hat{n}_{A}$ and
$\hat{n}_{E}$ are diagonal in the number basis, this is equivalent to the
following:%
\begin{align}
&  \frac{1}{N_{S}+1}\sum_{n=0}^{\infty}\sum_{k=0}^{n}\left(  \frac{N_{S}%
}{N_{S}+1}\right)  ^{n}\binom{n}{k}\left(  1-\eta\right)  ^{k}\eta^{n-k}%
\times\nonumber\\
&  \qquad\operatorname{Tr}\{\left(  |n\rangle\langle n|_{A}\otimes
|k\rangle\langle k|\right)  \left(  \hat{n}_{A}\otimes\hat{n}_{E}\right)
\}\nonumber\\
&  =\frac{1}{N_{S}+1}\sum_{n=0}^{\infty}\sum_{k=0}^{n}\left(  \frac{N_{S}%
}{N_{S}+1}\right)  ^{n}\binom{n}{k}\left(  1-\eta\right)  ^{k}\eta^{n-k}nk\\
&  =\frac{1}{N_{S}+1}\sum_{n=0}^{\infty}n\left(  \frac{N_{S}}{N_{S}+1}\right)
^{n}\sum_{k=0}^{n}\binom{n}{k}\left(  1-\eta\right)  ^{k}\eta^{n-k}k.
\end{align}
Consider that the expression $\sum_{k=0}^{n}\binom{n}{k}\left(  1-\eta\right)
^{k}\eta^{n-k}k$ is equal to the mean of a binomial random variable with
parameter $1-\eta$, and so%
\begin{equation}
\sum_{k=0}^{n}\binom{n}{k}\left(  1-\eta\right)  ^{k}\eta^{n-k}k=n\left(
1-\eta\right)  ,
\end{equation}
implying that the last line above is equal to%
\begin{equation}
\left(  1-\eta\right)  \frac{1}{N_{S}+1}\sum_{n=0}^{\infty}n^{2}\left(
\frac{N_{S}}{N_{S}+1}\right)  ^{n}.
\end{equation}
This is then equal to the second moment of a geometric random variable with
parameter $p=1/\left(  N_{S}+1\right)  $, so that%
\begin{align}
&  \left(  1-\eta\right)  \frac{1}{N_{S}+1}\sum_{n=0}^{\infty}n^{2}\left(
\frac{N_{S}}{N_{S}+1}\right)  ^{n}\nonumber\\
&  =\left(  1-\eta\right)  \left(  N_{S}\left(  N_{S}+1\right)  +N_{S}%
^{2}\right) \\
&  =\left(  1-\eta\right)  N_{S}\left(  2N_{S}+1\right)  .
\end{align}
Plugging into the above, we find the reduction%
\begin{multline}
=\log\left[  \left(  1-\eta\right)  N_{S}+1\right]  \log\left[  N_{S}+1\right]
\\
+N_{S}\log\left[  \left[  \left(  1-\eta\right)  N_{S}+1\right]  \right]
\log\left[  \left(  1+\frac{1}{N_{S}}\right)  \right] \\
+\left(  1-\eta\right)  N_{S}\log\left[  1+\frac{1}{\left(  1-\eta\right)
N_{S}}\right]  \log\left[  N_{S}+1\right] \\
+\left(  1-\eta\right)  N_{S}\left(  2N_{S}+1\right)  \log\left[  1+\frac
{1}{\left(  1-\eta\right)  N_{S}}\right]  \log\left[  1+\frac{1}{N_{S}%
}\right]  .
\end{multline}
From this we should subtract the following quantity%
\begin{align}
&  H(A)_{\rho}H(AB)_{\rho}\nonumber\\
&  =g(N_{S})g(\left(  1-\eta\right)  N_{S})\\
&  =\left[  \left(  N_{S}+1\right)  \log\left(  N_{S}+1\right)  -N_{S}\log
N_{S}\right]  \times\nonumber\\
&  \left[
\begin{array}
[c]{c}%
\left(  \left(  1-\eta\right)  N_{S}+1\right)  \log\left(  \left(
1-\eta\right)  N_{S}+1\right) \\
-\left(  1-\eta\right)  N_{S}\log\left(  1-\eta\right)  N_{S}%
\end{array}
\right] \\
&  =\left[  N_{S}\log\left(  1+\frac{1}{N_{S}}\right)  +\log\left(
N_{S}+1\right)  \right]  \times\nonumber\\
&  \left[  \left(  1-\eta\right)  N_{S}\log\left(  1+\frac{1}{\left(
1-\eta\right)  N_{S}}\right)  +\log\left(  \left(  1-\eta\right)
N_{S}+1\right)  \right]
\end{align}%
\begin{multline}
=\left(  1-\eta\right)  N_{S}^{2}\log\left(  1+\frac{1}{N_{S}}\right)
\log\left(  1+\frac{1}{\left(  1-\eta\right)  N_{S}}\right) \\
+\left(  1-\eta\right)  N_{S}\log\left(  N_{S}+1\right)  \log\left(
1+\frac{1}{\left(  1-\eta\right)  N_{S}}\right) \\
+N_{S}\log\left(  1+\frac{1}{N_{S}}\right)  \log\left(  \left(  1-\eta\right)
N_{S}+1\right) \\
+\log\left(  N_{S}+1\right)  \log\left(  \left(  1-\eta\right)  N_{S}%
+1\right)  ,
\end{multline}
leading to%
\begin{align}
&  \operatorname{Tr}\{\rho_{AB}\log\rho_{AB}\log\rho_{A}\}-H(A)_{\rho
}H(AB)_{\rho}\nonumber\\
&  =\left(  1-\eta\right)  N_{S}\left(  2N_{S}+1\right)  \log\left[
1+\frac{1}{\left(  1-\eta\right)  N_{S}}\right]  \log\left[  1+\frac{1}{N_{S}%
}\right] \nonumber\\
&  \qquad-\left(  1-\eta\right)  N_{S}^{2}\log\left(  1+\frac{1}{N_{S}%
}\right)  \log\left(  1+\frac{1}{\left(  1-\eta\right)  N_{S}}\right) \\
&  =\left(  1-\eta\right)  N_{S}\left(  N_{S}+1\right)  \log\left[  1+\frac
{1}{\left(  1-\eta\right)  N_{S}}\right]  \log\left[  1+\frac{1}{N_{S}%
}\right]  .
\end{align}
Putting everything together, we find that the variance of the reverse coherent
information is given by%
\begin{multline}
\left(  1-\eta\right)  N_{S}\left(  \left(  1-\eta\right)  N_{S}+1\right)
\left[  \log\left(  1+\frac{1}{\left(  1-\eta\right)  N_{S}}\right)  \right]
^{2}\\
-2\left(  1-\eta\right)  N_{S}\left(  N_{S}+1\right)  \log\left[  1+\frac
{1}{\left(  1-\eta\right)  N_{S}}\right]  \log\left[  1+\frac{1}{N_{S}}\right]
\\
+N_{S}\left(  N_{S}+1\right)  \left[  \log\left(  1+\frac{1}{N_{S}}\right)
\right]  ^{2}.
\end{multline}
For large $N_{S}$, we have that $\left(  \left(  1-\eta\right)  N_{S}%
+1\right)  \approx\left(  1-\eta\right)  N_{S}$ and $\left(  N_{S}+1\right)
\approx N_{S}$, so that the above reduces to%
\begin{equation}
\approx\left[  \left(  1-\eta\right)  N_{S}\log\left(  1+\frac{1}{\left(
1-\eta\right)  N_{S}}\right)  -N_{S}\log\left(  1+\frac{1}{N_{S}}\right)
\right]  ^{2},
\end{equation}
which converges to zero as $N_{S}\rightarrow\infty$.

\section{Weak converse bounds for secret-key-agreement capacities}

\label{sec:weak-converse-bnds}Here we argue for the weak-converse bounds given
in \eqref{eq:WC-bound} and \eqref{eq:WC-thermal}, and even more general
weak-converse bounds. The weak-converse bounds are a direct consequence of the
bounds in \cite{WTB16} and \cite[Eq.~(2)]{WR12} (see also \cite[Eq.~(134)]%
{MW12}).

First, recall from \cite[Eq.~(2)]{WR12} and \cite[Eq.~(134)]{MW12} that the
following bound holds for hypothesis testing relative entropy for
$\varepsilon\in(0,1)$:%
\begin{equation}
D_{H}^{\varepsilon}(\rho\Vert\sigma)\leq\frac{1}{1-\varepsilon}\left[
D(\rho\Vert\sigma)+h_{2}(\varepsilon)\right]  . \label{eq:WR-bound-to-rel-ent}%
\end{equation}
To see this, consider that the definition of $D_{H}^{\varepsilon}(\rho
\Vert\sigma)$ can be further constrained as%
\begin{multline}
D_{H}^{\varepsilon}(\rho\Vert\sigma)=\label{eq:hypo-rewrite}\\
-\log_{2}\inf_{\Lambda}\{\operatorname{Tr}\{\Lambda\sigma\}:0\leq\Lambda\leq
I\wedge\operatorname{Tr}\{\Lambda\rho\}=1-\varepsilon\}.
\end{multline}
That is, it suffices to optimize over measurement operators that meet the
constraint $\operatorname{Tr}\{\Lambda\rho\}\geq1-\varepsilon$ with equality.
This follows because for any measurement operator $\Lambda$ such that
$\operatorname{Tr}\{\Lambda\rho\}>1-\varepsilon$, we can modify it by scaling
it by a positive number $\lambda\in(0,1)$ such that $\operatorname{Tr}%
\{\left(  \lambda\Lambda\right)  \rho\}=1-\varepsilon$. The new operator
$\lambda\Lambda$ is a legitimate measurement operator and the error
probability $\operatorname{Tr}\{(\lambda\Lambda)\sigma\}$ only decreases under
this scaling (i.e., $\operatorname{Tr}\{(\lambda\Lambda)\sigma
\}<\operatorname{Tr}\{\Lambda\sigma\}$), which allows us to conclude
\eqref{eq:hypo-rewrite}. Now for any measurement operator $\Lambda$ such that
$\operatorname{Tr}\{\Lambda\rho\}=1-\varepsilon$, the monotonicity of quantum
relative entropy \cite{Lindblad1975} with respect to quantum channels implies
that%
\begin{align}
&  D(\rho\Vert\sigma)\nonumber\\
&  \geq D(\{1-\varepsilon,\varepsilon\}\Vert\{\operatorname{Tr}\{\Lambda
\sigma\},1-\operatorname{Tr}\{\Lambda\sigma\}\})\\
&  =(1-\varepsilon)\log_{2}\left(  \frac{1-\varepsilon}{\operatorname{Tr}%
\{\Lambda\sigma\}}\right)  +\varepsilon\log_{2}\!\left(  \frac{\varepsilon
}{1-\operatorname{Tr}\{\Lambda\sigma\}}\right) \\
&  =-\left(  1-\varepsilon\right)  \log_{2}\operatorname{Tr}\{\Lambda
\sigma\}-h_{2}(\varepsilon)\nonumber\\
&  \qquad\qquad+\varepsilon\log_{2}\!\left(  \frac{1}{1-\operatorname{Tr}%
\{\Lambda\sigma\}}\right) \\
&  \geq-\left(  1-\varepsilon\right)  \log_{2}\operatorname{Tr}\{\Lambda
\sigma\}-h_{2}(\varepsilon).
\end{align}
Rewriting this gives%
\begin{equation}
-\log\operatorname{Tr}\{\Lambda\sigma\}\leq\frac{1}{1-\varepsilon}\left[
D(\rho\Vert\sigma)+h_{2}(\varepsilon)\right]  .
\end{equation}
Since this bound holds for all measurement operators $\Lambda$ satisfying
$\operatorname{Tr}\{\Lambda\rho\}=1-\varepsilon$, we can conclude \eqref{eq:WR-bound-to-rel-ent}.

To conclude the desired weak-converse bounds, we then invoke the above and
\cite[Eq.~(4.34)]{WTB16} to get that the following bound holds for any
teleportation simulable channel with associated resource state $\omega_{AB}$:%
\begin{align}
P_{\mathcal{N}}^{\leftrightarrow}(n,\varepsilon)  &  \leq\frac{1}{n}%
E_{R}^{\varepsilon}(A^{n};B^{n})_{\omega^{\otimes n}}\\
&  \leq\frac{1}{n(1-\varepsilon)}\left[  E_{R}(A^{n};B^{n})_{\omega^{\otimes
n}}+h_{2}(\varepsilon)\right] \\
&  \leq\frac{1}{(1-\varepsilon)}\left[  E_{R}(A;B)_{\omega}+\frac
{h_{2}(\varepsilon)}{n}\right]  .
\end{align}
If the channel requires an infinite-energy resource state to become
teleportation simulable, then one must take care as in the case of the proofs
in \cite[Section~8]{WTB16}, and then one finally arrives at the weak-converse
bounds in \eqref{eq:WC-bound} and~\eqref{eq:WC-thermal}.

\section{Asymptotic equipartition property for hypothesis testing relative
entropy}

\label{sec:AEP-hypo}

In this appendix, we prove that the inequality in
\eqref{eq:upper-bnd-hypo}\ holds whenever the states $\rho$ and $\sigma$
involved act on a separable Hilbert space. Here we take the convention, for
convenience, that all logarithms are with respect to the natural base, but we
note that the bound \eqref{eq:hypo-expand} applies equally well for the binary
logarithm just by rescaling.

The following proposition is available as \cite[Eq.~(6.5)]{JOPS12} and
restated as \cite[Corollary~2]{DPR15}:

\begin{proposition}
[{\cite[Eq.~(6.5)]{JOPS12}}]\label{prop:jaksic}Let $\rho$ and $\sigma$ be
faithful states acting on a separable Hilbert space $\mathcal{H}$, let
$\Lambda$ be a measurement operator acting on $\mathcal{H}$ and such that
$0\leq\Lambda\leq I$, and let $v,\theta\in\mathbb{R}$. Then%
\begin{equation}
e^{-\theta}\operatorname{Tr}\{(I-\Lambda)\rho\}+\operatorname{Tr}%
\{\Lambda\sigma\}\geq\frac{e^{-\theta}}{1+e^{v-\theta}}\Pr\{X\leq v\},
\end{equation}
where $X$ is a random variable taking values $\log(\lambda_{x}/\mu_{y})$ with
probability $\left\vert \left\langle \psi_{y}|\phi_{x}\right\rangle
\right\vert ^{2}\lambda_{x}$, where these quantities are defined in
\eqref{eq:spec-decomp-rho} and \eqref{eq:spec-decomp-sigma}.
\end{proposition}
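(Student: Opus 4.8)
The plan is to reduce the quantum expression to a classical quantity built from the Nussbaum--Szko\l{}a distributions and then to solve the resulting classical likelihood-ratio problem by an elementary estimate. Write $\rho=\sum_x\lambda_x|\phi_x\rangle\langle\phi_x|$ and $\sigma=\sum_y\mu_y|\psi_y\rangle\langle\psi_y|$ as in \eqref{eq:spec-decomp-rho}--\eqref{eq:spec-decomp-sigma}, and set $w_{x,y}\equiv|\langle\psi_y|\phi_x\rangle|^2$. Faithfulness guarantees $\lambda_x,\mu_y>0$, so each ratio $\lambda_x/\mu_y$ is well defined and $X$ takes the value $\log(\lambda_x/\mu_y)$ with probability $w_{x,y}\lambda_x$; note that $\{w_{x,y}\}$ is doubly stochastic ($\sum_y w_{x,y}=\sum_x w_{x,y}=1$), so these probabilities indeed sum to one.

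The central step, and the one I expect to be the main obstacle, is the operator-to-classical comparison
\[ e^{-\theta}\operatorname{Tr}\{(I-\Lambda)\rho\}+\operatorname{Tr}\{\Lambda\sigma\}\geq\sum_{x,y}w_{x,y}\min(e^{-\theta}\lambda_x,\mu_y), \]
which is a standard operator comparison lemma (due to Nussbaum and Szko\l{}a) applied here to the positive operators $e^{-\theta}\rho$ and $\sigma$. I would prove it by decomposing each overlap through the test: since $\langle\psi_y|\phi_x\rangle=\langle\psi_y|\Lambda|\phi_x\rangle+\langle\psi_y|(I-\Lambda)|\phi_x\rangle$, the operator Cauchy--Schwarz inequality gives $|\langle\psi_y|\phi_x\rangle|\leq\|\Lambda^{1/2}\psi_y\|\,\|\Lambda^{1/2}\phi_x\|+\|(I-\Lambda)^{1/2}\psi_y\|\,\|(I-\Lambda)^{1/2}\phi_x\|$, and summing these estimates against the eigenvalues of $e^{-\theta}\rho$ and $\sigma$ yields the claim. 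The subtle point is that the inequality holds only after summing over all $(x,y)$: a naive term-by-term bound is false, because the coherences of $\Lambda$ in the two eigenbases couple distinct pairs, so the argument must be carried out globally rather than pair by pair. In the separable (infinite-dimensional) setting I would additionally verify convergence of all the sums, which is where faithfulness and the finiteness hypotheses are used.

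Granting the comparison, the remainder is elementary. I would discard the nonnegative terms with $\log(\lambda_x/\mu_y)>v$ and retain only the set $S=\{(x,y):\lambda_x\leq e^{v}\mu_y\}$. For $(x,y)\in S$ one has $\mu_y\geq e^{-v}\lambda_x$, hence $\min(e^{-\theta}\lambda_x,\mu_y)\geq\lambda_x\min(e^{-\theta},e^{-v})=\lambda_x e^{-\max(\theta,v)}$, and since $e^{\max(\theta,v)}\leq e^{\theta}+e^{v}$ this is at least $\frac{e^{-\theta}}{1+e^{v-\theta}}\lambda_x$. Summing over $S$ and recognizing $\sum_{(x,y)\in S}w_{x,y}\lambda_x=\Pr\{X\leq v\}$ then produces the stated bound. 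In short, essentially all of the work is concentrated in the global operator inequality of the second step; once it is in hand, the likelihood-ratio thresholding is a one-line calculation.
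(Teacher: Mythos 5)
Your endgame (restricting to the event $\{X\le v\}$ and using $\min(e^{-\theta},e^{-v})\ge 1/(e^{\theta}+e^{v})=e^{-\theta}/(1+e^{v-\theta})$) is fine, but the inequality you yourself identify as carrying ``essentially all of the work'' is false as stated. The comparison
\begin{equation*}
e^{-\theta}\operatorname{Tr}\{(I-\Lambda)\rho\}+\operatorname{Tr}\{\Lambda\sigma\}\geq\sum_{x,y}w_{x,y}\min(e^{-\theta}\lambda_{x},\mu_{y})
\end{equation*}
does not hold without an extra factor of $1/2$ on the right-hand side. Counterexample: take $\theta=0$, $\rho=(1-\epsilon)|0\rangle\langle 0|+\epsilon|1\rangle\langle 1|$ and $\sigma=(1-\epsilon)|+\rangle\langle +|+\epsilon|-\rangle\langle -|$ with $\epsilon$ small (both faithful). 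Then $w_{x,y}=1/2$ for all four pairs, so the right-hand side equals $\tfrac{1}{2}+\epsilon$, while the Helstrom test $\Lambda$ gives a left-hand side of $1-\tfrac{1}{2}\Vert\rho-\sigma\Vert_{1}=1-(1-2\epsilon)/\sqrt{2}\approx 0.293<\tfrac{1}{2}$. The genuine Nussbaum--Szko\l{}a-type lemma carries the factor $1/2$, and its proof is not the one you sketch: if you sum your bound $|\langle\psi_{y}|\phi_{x}\rangle|\le\Vert\Lambda^{1/2}\psi_{y}\Vert\,\Vert\Lambda^{1/2}\phi_{x}\Vert+\Vert(I-\Lambda)^{1/2}\psi_{y}\Vert\,\Vert(I-\Lambda)^{1/2}\phi_{x}\Vert$ against the eigenvalues, the product structure produces terms such as $\operatorname{Tr}\{\Lambda\sigma\}\sum_{x}\langle\phi_{x}|\Lambda|\phi_{x}\rangle=\operatorname{Tr}\{\Lambda\sigma\}\operatorname{Tr}\{\Lambda\}$, which diverges on a separable Hilbert space (and is dimension-dependent even in finite dimensions). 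The working argument instead first passes to the optimal projective test $\Pi$, uses $\Pi^{2}=\Pi$ and Parseval to write $\langle\phi_{x}|(I-\Pi)|\phi_{x}\rangle=\sum_{y}|\langle\psi_{y}|(I-\Pi)|\phi_{x}\rangle|^{2}$ (and similarly for $\Pi$, $\psi_{y}$), pairs the two sums with $\min(e^{-\theta}\lambda_{x},\mu_{y})$, and finishes with $|\alpha|^{2}+|\beta|^{2}\ge\tfrac{1}{2}|\alpha+\beta|^{2}$; the $1/2$ entering there is unavoidable, as the counterexample shows.

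Consequently, even after repair, your route proves only the weaker statement with prefactor $e^{-\theta}/[2(1+e^{v-\theta})]$. That weaker bound would in fact suffice for everything this proposition is used for in the paper, since the extra factor of $2$ contributes only an additive constant to $-\log\operatorname{Tr}\{\Lambda^{n}\sigma^{\otimes n}\}$ and is absorbed into the $O(\log n)$ term of \eqref{eq:hypo-expand}; but it is not a proof of the proposition as written. Note also that the paper itself gives no proof of this statement: it is imported from \cite{JOPS12} (restated in \cite{DPR15}), where the exact constant $e^{-\theta}/(1+e^{v-\theta})$ is obtained by working with the spectral resolution of the relative modular operator $\Delta_{\sigma|\rho}$ rather than with the Nussbaum--Szko\l{}a distributions, so recovering the stated constant requires a genuinely different quantum-to-classical step than the one you propose.
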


The following proposition is based on ideas given in~\cite{DPR15}:

\begin{proposition}
Let $\rho$ and $\sigma$ be faithful states acting on a separable Hilbert space
$\mathcal{H}$, such that
\begin{align}
D(\rho\Vert\sigma),\ V(\rho\Vert\sigma),\ T(\rho\Vert\sigma)  &
<\infty,\nonumber\\
V(\rho\Vert\sigma)  &  >0. \label{eq:DTV-assumps}%
\end{align}
Then the following bound holds for all $\varepsilon\in(0,1)$ and sufficiently
large~$n$:%
\begin{multline}
D_{H}^{\varepsilon}(\rho^{\otimes n}\Vert\sigma^{\otimes n})\leq
\label{eq:hypo-expand}\\
nD(\rho\Vert\sigma)+\sqrt{nV(\rho\Vert\sigma)}\Phi^{-1}(\varepsilon)+O(\log
n).
\end{multline}

\end{proposition}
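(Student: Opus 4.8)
The plan is to prove the upper bound \eqref{eq:hypo-expand} by combining Proposition~\ref{prop:jaksic} with a one-shot optimization over the free parameters $v$ and $\theta$, followed by a central-limit-type expansion of the tail probability $\Pr\{X \leq v\}$ using the Berry--Esseen theorem. First I would apply Proposition~\ref{prop:jaksic} to the tensor-power states $\rho^{\otimes n}$ and $\sigma^{\otimes n}$. For these states, the relevant random variable is $X_n = \sum_{i=1}^{n} X_i$, where each $X_i$ is an i.i.d.\ copy of the single-letter random variable $X$ taking values $\log(\lambda_x/\mu_y)$ with probability $|\langle \psi_y | \phi_x\rangle|^2 \lambda_x$. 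The key statistics are then $\mathbb{E}[X] = D(\rho\Vert\sigma)$, $\operatorname{Var}[X] = V(\rho\Vert\sigma)$, and the finiteness of the centered third absolute moment, which is exactly $T(\rho\Vert\sigma)$, guaranteed by the hypotheses \eqref{eq:DTV-assumps}.

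The second step is to connect the left-hand side of Proposition~\ref{prop:jaksic} to the hypothesis testing relative entropy. For any measurement operator $\Lambda$ feasible in the definition of $D_H^\varepsilon(\rho^{\otimes n}\Vert\sigma^{\otimes n})$, we have $\operatorname{Tr}\{(I-\Lambda)\rho^{\otimes n}\}\leq \varepsilon$, so the proposition yields
\begin{equation}
e^{-\theta}\varepsilon + \operatorname{Tr}\{\Lambda\sigma^{\otimes n}\} \geq \frac{e^{-\theta}}{1+e^{v-\theta}}\Pr\{X_n \leq v\}.
\end{equation}
Rearranging gives a lower bound on $\operatorname{Tr}\{\Lambda\sigma^{\otimes n}\}$, and since $D_H^\varepsilon = -\log \inf_\Lambda \operatorname{Tr}\{\Lambda\sigma^{\otimes n}\}$, this translates into an upper bound on $D_H^\varepsilon$. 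The plan is to choose the parameters so that the term $e^{-\theta}\varepsilon$ is negligible relative to the second term: I would set $v = nD(\rho\Vert\sigma) + \sqrt{nV(\rho\Vert\sigma)}\,\Phi^{-1}(\varepsilon) + c\log n$ for a suitable constant $c$, and choose $\theta$ close to $v$ but offset by an $O(\log n)$ amount so that $e^{v-\theta}$ and $e^{-\theta}$ are controlled. The resulting bound on $D_H^\varepsilon$ is then $\theta + \log(1+e^{v-\theta}) - \log\Pr\{X_n\leq v\}$ plus the correction from the $e^{-\theta}\varepsilon$ term.

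The third and crucial step is to estimate $\Pr\{X_n \leq v\}$ via the Berry--Esseen theorem. Writing $v$ in the standardized form above, the theorem gives
\begin{equation}
\left|\Pr\{X_n \leq v\} - \Phi\!\left(\Phi^{-1}(\varepsilon) + \frac{c\log n}{\sqrt{nV(\rho\Vert\sigma)}}\right)\right| \leq \frac{C\,T(\rho\Vert\sigma)}{V(\rho\Vert\sigma)^{3/2}\sqrt{n}},
\end{equation}
where $C$ is the absolute Berry--Esseen constant. Since $V(\rho\Vert\sigma)>0$ and $T(\rho\Vert\sigma)<\infty$, the error term is $O(1/\sqrt{n})$, and the argument of $\Phi$ tends to $\Phi^{-1}(\varepsilon)$, so $\Pr\{X_n\leq v\}$ is bounded away from zero for large $n$, making $-\log\Pr\{X_n\leq v\} = O(\log n)$. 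Collecting all the pieces, the dominant contribution is $v = nD(\rho\Vert\sigma) + \sqrt{nV(\rho\Vert\sigma)}\,\Phi^{-1}(\varepsilon)$, with everything else absorbed into $O(\log n)$.

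The main obstacle I expect is the careful bookkeeping in the parameter selection: the terms $\theta$, $\log(1+e^{v-\theta})$, the penalty $-\log\Pr\{X_n\leq v\}$, and the contribution of $e^{-\theta}\varepsilon$ all interact, and one must verify that each is simultaneously $O(\log n)$ without any of them secretly scaling like $\sqrt{n}$ or worse. In particular, the offset between $v$ and $\theta$ must be tuned so that $e^{v-\theta}$ stays bounded while $e^{-\theta}$ decays fast enough that $e^{-\theta}\varepsilon$ does not spoil the Berry--Esseen estimate of the tail probability. The finiteness of $T(\rho\Vert\sigma)$ is precisely what licenses the Berry--Esseen step, so I would emphasize where assumption \eqref{eq:DTV-assumps} enters; without it one would be forced to fall back on the weaker Chebyshev-based expansion, as discussed in Appendix~\ref{sec:disp-rev-coh-info}.
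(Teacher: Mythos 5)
Your proposal is correct and takes essentially the same route as the paper's proof: apply Proposition~\ref{prop:jaksic} to the tensor-power states, invoke the Berry--Esseen theorem (licensed by $T(\rho\Vert\sigma)<\infty$ and $V(\rho\Vert\sigma)>0$), and tune $v$ and $\theta$ with $\theta-v$ of order $\tfrac{1}{2}\log n$ so that every residual term is $O(\log n)$. The only difference is bookkeeping in how the margin above $\varepsilon$ is created---you add $c\log n$ to $v$ and estimate the increment of $\Phi$ by a mean-value argument, whereas the paper evaluates $\Phi^{-1}$ at $\varepsilon$ shifted by an explicit $O(n^{-1/2})$ amount and Taylor-expands $\Phi^{-1}$ at the end---and your device does make the tail probability exceed $\varepsilon\left(1+e^{v-\theta}\right)$ by an $\Omega(\log n/\sqrt{n})$ margin, which is the precise requirement (rather than merely being bounded away from zero, as your third step phrases it), so the argument closes.
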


\begin{proof}
We follow the justification for Theorem~3 given in \cite{DPR15} closely, but
we do make some slight changes after the first few steps. Let $\Lambda^{n}$
be any measurement operator satisfying $\operatorname{Tr}\{\left(  I^{\otimes
n}-\Lambda^{n}\right)  \rho^{\otimes n}\}\leq\varepsilon$. By applying the
above proposition (making the replacements $\rho\rightarrow\rho^{\otimes n}$
and $\sigma\rightarrow\sigma^{\otimes n}$, so that $X_{n}$ is a sum of $n$
i.i.d.~random variables, each having mean $D(\rho\Vert\sigma)$, variance
$V(\rho\Vert\sigma)$, and third absolute central moment $T(\rho\Vert\sigma)$),
we find that%
\begin{align}
&  \operatorname{Tr}\{\Lambda^{n}\sigma^{\otimes n}\}\nonumber\\
&  \geq e^{-\theta_{n}}\left(  \frac{\Pr\{X_{n}\leq v_{n}\}}{1+e^{v_{n}%
-\theta_{n}}}-\operatorname{Tr}\{(I-\Lambda^{n})\rho^{\otimes n}\}\right) \\
&  \geq e^{-\theta_{n}}\left(  \frac{\Pr\{X_{n}\leq v_{n}\}}{1+e^{v_{n}%
-\theta_{n}}}-\varepsilon\right)  . \label{eq:start-point-CLT-bnd}%
\end{align}
The Berry--Esseen theorem \cite{KS10,S11} implies for any real number~$a$ that%
\begin{equation}
\Pr\left\{  \frac{ X_{n}-nD(\rho\Vert\sigma)}{\sqrt{nV(\rho\Vert\sigma)}}\leq
a\right\}  \geq\Phi(a) - K_{\rho,\sigma} \, n^{-1/2} ,
\end{equation}
where
\begin{equation}
K_{\rho,\sigma}\equiv\frac{C \, T(\rho\Vert\sigma)}{ [V(\rho\Vert
\sigma)]^{3/2}}%
\end{equation}
and $C \in(0,0.4748)$ \cite{KS10,S11}. It is clear that $K_{\rho,\sigma}$ is a
strictly positive constant $>C$ due to the assumption in
\eqref{eq:DTV-assumps} and the fact that $T(\rho\Vert\sigma) \geq[V(\rho
\Vert\sigma)]^{3/2}$ \cite{S11}. Let us set
\begin{multline}
v_{n}=nD(\rho\Vert\sigma)+\\
\sqrt{nV(\rho\Vert\sigma)}\Phi^{-1}(\varepsilon+ (2 + K_{\rho,\sigma})
n^{-1/2}),
\end{multline}
and note that we require sufficiently large $n$ here, so that the argument to
$\Phi^{-1}$ is $\in(0,1)$. We then find that%
\begin{equation}
\operatorname{Tr}\{\Lambda^{n}\sigma^{\otimes n}\}\geq e^{-\theta_{n}}\left(
\frac{\varepsilon+ 2 n^{-1/2}}{1+e^{v_{n}-\theta_{n}}}-\varepsilon\right)  .
\end{equation}
Now choosing $\theta_{n}=v_{n}+\frac{1}{2}\log n$, we get that%
\begin{multline}
\operatorname{Tr}\{\Lambda^{n}\sigma^{\otimes n}\}\geq\\
\left[  e^{-nD(\rho\Vert\sigma)-\sqrt{nV(\rho\Vert\sigma)}\Phi^{-1}%
(\varepsilon+ (2+ K_{\rho,\sigma}) n^{-1/2} )-\frac{1}{2}\log n}\right]
\times\\
\left(  \frac{1}{1+n^{-1/2}}\right)  ,
\end{multline}
so that the following inequality holds for sufficiently large~$n$:%
\begin{multline}
-\log\operatorname{Tr}\{\Lambda^{n}\sigma^{\otimes n}\}\leq\\
nD(\rho\Vert\sigma)+\sqrt{nV(\rho\Vert\sigma)}\Phi^{-1}(\varepsilon)+O(\log
n).
\end{multline}
In the last line, we have invoked \cite[Footnote~6]{TH12}, which in turn is an
invocation of Taylor's theorem: for $f$ continuously differentiable, $c$ a
positive constant, and $n \geq n_{0}$, the following equality holds
\begin{equation}
\sqrt{n} f(x + c/\sqrt{n}) = \sqrt{n}f(x) + c f^{\prime}(a)
\end{equation}
for some $a \in[x,x+c/\sqrt{n_{0}}]$.
\end{proof}

\section{Finiteness of the third absolute central moment of the log likelihood
ratio for quantum Gaussian states}

\label{sec:finiteness}

We argue in this final appendix that $T(\rho\Vert\sigma)$, the third absolute
central moment of the log-likelihood ratio of two finite-energy, zero-mean
Gaussian states $\rho$ and $\sigma$, is finite. By definition, we have that%
\begin{equation}
T(\rho\Vert\sigma)=\sum_{x,y}\left\vert \left\langle \psi_{y}|\phi
_{x}\right\rangle \right\vert ^{2}\lambda_{x}\left\vert \log_{2}\left(
\lambda_{x}/\mu_{y}\right)  -D(\rho\Vert\sigma)\right\vert ^{3},
\end{equation}
where the spectral decompositions of $\rho$ and $\sigma$ are given by%
\begin{equation}
\rho=\sum_{x}\lambda_{x}|\phi_{x}\rangle\langle\phi_{x}|,\qquad\sigma=\sum
_{y}\mu_{y}|\psi_{y}\rangle\langle\psi_{y}|.
\end{equation}
By concavity of $x^{3/4}$ for $x\geq0$, it follows that%
\begin{multline}
T(\rho\Vert\sigma)\\
=  \sum_{x,y}\left\vert \left\langle \psi_{y}|\phi_{x}\right\rangle
\right\vert ^{2}\lambda_{x}
\left[
\left\vert \log_{2}\left(  \lambda_{x}/\mu
_{y}\right)  -D(\rho\Vert\sigma)\right\vert ^{4}\right]  ^{3/4}\\
\leq\left[  \sum_{x,y}\left\vert \left\langle \psi_{y}|\phi_{x}\right\rangle
\right\vert ^{2}\lambda_{x}\left\vert \log_{2}\left(  \lambda_{x}/\mu
_{y}\right)  -D(\rho\Vert\sigma)\right\vert ^{4}\right]  ^{3/4}\\
=\left[  \sum_{x,y}\left\vert \left\langle \psi_{y}|\phi_{x}\right\rangle
\right\vert ^{2}\lambda_{x}\left(  \log_{2}\left(  \lambda_{x}/\mu_{y}\right)
-D(\rho\Vert\sigma)\right)  ^{4}\right]  ^{3/4},\label{eq:fourth-moment-bound}%
\end{multline}
and so we aim to show that this latter quantity is finite. For zero-mean,
$m$-mode faithful Gaussian states, the Williamson theorem \cite{W36} implies
that their spectral decompositions are as follows:%
\begin{align}
\rho &  =U_{\rho}\left(  \bigotimes\limits_{i=1}^{m}\theta(N_{\rho}%
^{i})\right)  U_{\rho}^{\dag},\\
\sigma &  =U_{\sigma}\left(  \bigotimes\limits_{i=1}^{m}\theta(N_{\sigma}%
^{i})\right)  U_{\sigma}^{\dag},
\end{align}
where $U_{\rho}$ and $U_{\sigma}$ denote Gaussian unitaries that can be
generated by a Hamiltonian no more than quadratic in the position- and
momentum-quadrature operators, $N_{\rho}^{i},N_{\sigma}^{i}>0$ for all $i$,
and $\theta(N)$ denotes a thermal state of mean photon number $N$:%
\begin{equation}
\theta(N)=\frac{1}{N+1}\sum_{n=0}^{\infty}\left(  \frac{N}{N+1}\right)
^{n}|n\rangle\langle n|,
\end{equation}
with $|n\rangle$ denoting a photonic number state. Introducing the multi-index
notation $|\vec{n}\rangle=|n_{1}\rangle\cdots|n_{m}\rangle$, we can then write
the overlap $\left\vert \left\langle \psi_{y}|\phi_{x}\right\rangle
\right\vert ^{2}$ as $\left\vert \langle\vec{l}|U_{\sigma}^{\dag}U_{\rho}%
|\vec{n}\rangle\right\vert ^{2}$. This conditional probability distribution
represents the probability of detecting the photon numbers $\vec{l}$ if the
photon number state $|\vec{n}\rangle$ is prepared and transmitted through the
Gaussian unitary $U_{\sigma}^{\dag}U_{\rho}\equiv V$. This distribution has
well defined (finite) higher moments with respect to photon number. Setting
$\hat{n}_{i}$ to be the photon number operator for the $i$th mode, this claim
follows because the $k$th moment of the conditional probability distribution
$\left\vert \langle\vec{l}|U_{\sigma}^{\dag}U_{\rho}|\vec{n}\rangle\right\vert
^{2}$ is given by%
\begin{multline}
\operatorname{Tr}\left\{  V|\vec{n}\rangle\langle\vec{n}|V^{\dag}\left(
\sum_{i=1}^{m}\hat{n}_{i}\right)  ^{k}\right\}  \label{eq:bounded-moments}\\
=\operatorname{Tr}\left\{  |\vec{n}\rangle\langle\vec{n}|\left(  \sum
_{i=1}^{m}V^{\dag}\hat{n}_{i}V\right)  ^{k}\right\}  .
\end{multline}
Since $V$ is a Gaussian unitary generated by a Hamiltonian no more than
quadratic in the position and momentum-quadrature operators \cite{S17}, each
$V^{\dag}\hat{n}_{i}V$ is a bounded linear combination of position and
momentum-quadrature operators and so $\left(  \sum_{i=1}^{m}V^{\dag}\hat
{n}_{i}V\right)  ^{k}$ is as well since $k$ is finite. Given that the photon
number states have bounded moments, we can conclude that
\eqref{eq:bounded-moments} is finite. The eigenvalues $\lambda_{x}$ and
$\mu_{y}$ in this case are given by%
\begin{align}
&  \prod\limits_{i=1}^{m}\left[  \frac{1}{N_{\rho}^{i}+1}\left(  \frac
{N_{\rho}^{i}}{N_{\rho}^{i}+1}\right)  ^{n_{i}}\right]  ,\label{eq:rho-dist}\\
&  \prod\limits_{i=1}^{m}\left[  \frac{1}{N_{\sigma}^{i}+1}\left(
\frac{N_{\sigma}^{i}}{N_{\sigma}^{i}+1}\right)  ^{l_{i}}\right]  ,
\end{align}
and indexed by the multi-indices $\vec{n}$ and $\vec{l}$, respectively. The
distribution in \eqref{eq:rho-dist}\ has well defined (finite)\ higher moments
with respect to photon number because it is a product of geometric
distributions. We can then write
$\log_{2}\!\left(  \lambda_{x}/\mu_{y}\right)
=\log_{2}\!\left(  \lambda_{x}\right)  -\log_{2}\!\left(  \mu_{y}\right)  $ as%
\begin{multline}
\sum_{i=1}^{m}\log_{2}\!\left(  \frac{N_{\sigma}^{i}+1}{N_{\rho}^{i}+1}\right)
+n_{i}\log_{2}\!\left(  \frac{N_{\rho}^{i}}{N_{\rho}^{i}+1}\right)  \\
-l_{i}\log_{2}\!\left(  \frac{N_{\sigma}^{i}}{N_{\sigma}^{i}+1}\right)  .
\end{multline}
Thus, after expanding, the last quantity in brackets in \eqref{eq:fourth-moment-bound} is
equal to an expression involving no more than the fourth moments of photon numbers, but we
have already argued that this is finite for the distributions under question.
As a consequence, we can conclude that $T(\rho\Vert\sigma)$ is finite whenever
$\rho$ and $\sigma$ are zero-mean, finite-energy, faithful Gaussian states.

\bibliographystyle{unsrt}
\bibliography{Ref}

\end{document}